\newcommand{\eps}{\varepsilon}
\newcommand{\med}{\mathsf{med}}
\newcounter{Frame}
\newenvironment{customizedFrame}[1]
  {\mdfsetup{
    frametitle={\colorbox{white}{\space#1\space}},
    skipabove=\baselineskip plus 2pt minus 1pt,
    skipbelow=\baselineskip plus 2pt minus 1pt,
    frametitleaboveskip=-0.7\ht\strutbox,
    frametitlealignment=\center
    }
  \begin{mdframed}%
  }
  {\end{mdframed}}
\newcommand\@erelb@r[1]{%
  \mathrel{\tikz[baseline=-.5ex]\draw[#1] (0,0)--(0.3,0);}
}
\newcommand{\erelbar}[1]{\@erelbar#1}
\def\@erelbar#1#2{%
  \ifcase\numexpr#1*4+#2\relax
    \@erelb@r{-}\or     
    \@erelb@r{->}\or    
    \@erelb@r{-|}\or    
    \@erelb@r{->|}\or   
    \@erelb@r{<-}\or    
    \@erelb@r{<->}\or   
    \@erelb@r{<-|}\or   
    \@erelb@r{<->}\or   
    \@erelb@r{|-}\or    
    \@erelb@r{|->}\or   
    \@erelb@r{|-|}\or   
    \@erelb@r{|<->|}\or 
    \@erelb@r{|<-}\or   
    \@erelb@r{|<->}\or  
    \@erelb@r{|<-|}\or  
    \@erelb@r{|<->|}    
  \else
    \@wrong
  \fi
}
\theoremstyle{plain}
\newtheorem{theorem}{Theorem}[section]
\newtheorem{lem}[theorem]{Lemma}
\newtheorem{clm}[theorem]{Claim}
\newtheorem{cond}[theorem]{Condition}
\newcommand{\poly}{{\rm poly}}
\newcommand{\N}{\mathbb{N}}   
\newcommand{\calP}{\mathcal{P}} \newcommand{\calD}{\mathcal{D}} 
\newcommand{\calQ}{\mathcal{Q}}
\newcommand{\OO}{\mathcal{O}}
\newcommand{\pr}[1]{\text{\bf Pr}\normalfont\bigl[ #1 \bigr]}
\newcommand{\ppr}[2]{\underset{#1}{\text{\bf Pr}} \normalfont\bigl[ #2 \bigr]}
\newcommand{\bpr}[1]{\text{\bf Pr}\normalfont\Bigl[ #1 \Bigr]}
\newcommand{\R}{\mathbb{R}}
\DeclareMathOperator*{\argmin}{arg\,min}
\newcommand{\var}[1]{\text{\bf Var}\normalfont\lbrack #1 \rbrack} 
\newcommand{\e}[1]{\text{\bf E}\normalfont\lbrack #1 \rbrack}
\newcommand{\opt}{\mathsf{opt}}
\newcommand{\Tournament}{\texttt{Tournament}}
\newcommand{\QuadTree}{\texttt{QuadTree}}
\providecommand{\email}[1]{\href{mailto:#1}{\nolinkurl{#1}\xspace}}
\newtheorem*{theorem*}{Theorem}
\newcolumntype{L}[1]{>{\raggedright\arraybackslash}p{#1}} 
\newcolumntype{C}[1]{>{\centering\arraybackslash}p{#1}}    
\newcolumntype{R}[1]{>{\raggedleft\arraybackslash}p{#1}}   
\begin{document}

\newcommand{\tikzxmark}{%
\tikz[scale=0.23] {
    \draw[line width=0.7,line cap=round] (0,0) to [bend left=6] (1,1);
    \draw[line width=0.7,line cap=round] (0.2,0.95) to [bend right=3] (0.8,0.05);
}}
\DeclarePairedDelimiter\ceil{\lceil}{\rceil}
\DeclarePairedDelimiter\floor{\lfloor}{\rfloor}

\renewcommand{\subparagraph}[1]{\medskip\noindent\underline{\textit{#1}}}

\title{Even Faster Algorithm for the Chamfer
Distance}
\author{Ying Feng\thanks{MIT. 
E-mail: \email{yingfeng@mit.edu}}
\and
Piotr Indyk\thanks{MIT. 
E-mail: \email{indyk@mit.edu}}
}
\date{\today}
\maketitle




\begin{abstract}
For two $d$-dimensional point sets $A,B$  of size up to $n$, the Chamfer distance from $A$ to $B$ is defined as $CH(A,B)=\sum_{a \in A} \min_{b \in B} \|a-b\|$.
 The Chamfer distance is a widely used measure for quantifying dissimilarity between sets of points, used in many machine learning and computer vision applications. A recent work of Bakshi et al, NeuriPS'23, gave the first near-linear time $(1+\eps)$-approximate algorithm, with a running time of  $\OO(nd \log (n)/\eps^2)$. In this paper we improve the running time further,  to $\OO(nd(\log\log n+\log\frac{1}{\eps})/\eps^2)$. 
  When $\eps$ is a constant, this reduces 
  the gap between the upper bound and the trivial $\Omega(dn)$ lower bound significantly, from $\OO(\log n)$ to $\OO(\log\log n)$. 
\end{abstract}

\newpage

\section{Introduction}

For any two $d$-dimensional point sets $A,B$ of sizes up to $n$, the Chamfer distance from $A$ to $B$ is defined as 
\newcommand{\CH}{\mathsf{CH}}
\begin{equation*}
\CH(A,B) = \sum_{a \in A} \min_{b \in B} \|a-b\|
\end{equation*}

where $\|\cdot\|$ is the underlying norm defining the distance between the points. Chamfer distance and its variant, the Relaxed Earth Mover Distance~\cite{kusner2015word,atasu19a}, are widely used metrics for quantifying the distance between two {\em sets} of points. These measures are especially popular in fields such as machine learning (e.g.,\cite{kusner2015word,wan2019transductive}) and computer vision (e.g.,\cite{athitsos2003estimating,sudderth2004visual,fan2017point,jiang2018gal}). A closely related notion of ``the sum of maximum similarities'', where $\min_{b \in B} \|a-b\|$ is replaced by $\max_{b \in B} a \cdot b$, has been recently popularized by the ColBERT system~\cite{khattab2020colbert}. Efficient subroutines for computing Chamfer distances are provided in prominent libraries including Pytorch~\cite{pytorch3d}, PDAL~\cite{pdal} and Tensorflow~\cite{tensorflow}. In many applications (e.g., see ~\cite{kusner2015word}), Chamfer distance is favored as a faster alternative to the more computationally intensive Earth-Mover Distance or Wasserstein Distance.

Despite the popularity of Chamfer distance, efficient algorithms for computing it haven't attracted as much attention as algorithms for, say, the Earth-Mover Distance. The first improvement to the naive  $\OO(dn^2)$-time algorithm was obtained in~\cite{sudderth2004visual}, who utilized the fact that $\CH(A,B)$ can be computed by performing $|A|$ nearest neighbor queries in a data structure storing $B$. However, even when  the state of the art approximate nearest neighbor algorithms are used, this leads to an $(1+\epsilon)$-approximate estimator with only slightly sub-quadratic running time of $\OO \left(dn^{1+\frac{1}{2 (1+\epsilon)^2 -1}}\right)$ in high dimensions~\cite{andoni2015optimal}\footnote{All algorithms considered in this paper are randomized, and return $(1+\eps)$-approximate answers with a constant probability}. The first near-linear-time algorithm for any dimension was proposed only recently in ~\cite{BIJ24}, who gave a $(1+\epsilon)$-approximation algorithm with a running time of $\OO(dn \log(n)/\eps^2)$, for $\ell_1$ and $\ell_2$ norms.
Since any algorithm for approximating the distance must run in at least $\Omega(dn)$ time\footnote{The Chamfer Distance could be dominated by the distance from a single point $a \in A$ to $B$.}, the the upper and lower running time bounds differed by a factor of $\log(n)/\eps^2$.

\ \\
{\bf Our result:}
In this paper we make a substantial progress towards reducing the gap between the upper and lower bounds for this problem. In particular, we show the following theorem. Assume a Word RAM model where both the input coordinates and the memory/processor word size is $\OO(\log n)$ bits.\footnote{In the Appendix, we adopt the reduction of~\cite{BIJ24} to extend the result to coordinates of arbitrary finite precision.} Then:

\begin{theorem}
\label{t:main}
    There is an algorithm that, given two sets $A, B$ of $d$-dimensional points with coordinates in 
    $\{1 \ldots \mbox{\poly}(n) \}$ and a parameter $\eps>0$, computes a $(1+\eps)$-approximation to the Chamfer distance from $A$ to $B$ under the $\ell_1$ metric,  in time  \[ \OO(nd(\log\log n+\log\frac{1}{\eps})/\eps^2)) .\]
   The algorithm is randomized and is correct with a constant probability.
\end{theorem}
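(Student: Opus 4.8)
\medskip\noindent\textbf{Proof plan.}
I would follow the importance-sampling scheme of \cite{BIJ24} but replace its bottleneck subroutine. Write $D(a)=\min_{b\in B}\|a-b\|_1$, so that $\CH(A,B)=\sum_{a\in A}D(a)$. The scheme is: (i) compute for every $a\in A$ a \emph{crude} estimate $\widetilde D(a)$ with $\Omega(D(a))\le\widetilde D(a)$ and $\E[\widetilde D(a)]\le\gamma\, D(a)$; (ii) importance-sample $m$ points $a$ with probability $p_a\propto\widetilde D(a)$; (iii) for each sampled $a$ compute a $(1+\eps)$-approximation $D'(a)$ of $D(a)$ and return the average of $D'(a)/p_a$. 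The output is unbiased for $\CH(A,B)$, and since $\widetilde D(a)\gtrsim D(a)$ its relative variance is $\OO\big(\E[\sum_a\widetilde D(a)]/\CH(A,B)\big)=\OO(\gamma)$, so $m=\OO(\gamma/\eps^2)$ samples suffice; a union bound over the $\OO(m)$ nearest-neighbor queries in (iii) then gives constant success probability. In \cite{BIJ24} step (i) is done by scanning all $\Theta(\log n)$ dyadic distance scales for each point, costing $\OO(nd\log n)$; this is exactly the $\log n$ I must kill.

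For step (i) I would build over $B$ a hierarchy of $L=\OO(\log n)$ \emph{nested}, randomly-shifted dyadic grids, so that the predicate ``$a$'s level-$\ell$ cell contains a point of $B$'' is monotone in $\ell$, and then locate the smallest non-empty level $\ell^*(a)$ by \emph{binary search}: $\OO(\log L)=\OO(\log\log n)$ probes, $\OO(d)$ each. Two points are essential. (1) The $B$-side structure must be built in $\OO(nd)$ rather than $\OO(ndL)$: I would store the compressed quadtree of $B$ — $\OO(n)$ cells, built in $\OO(nd)$ from Morton codes via radix sort, using the Word-RAM assumption — and answer each probe by a predecessor-type lookup in one hash table keyed by cell identifiers, so that the $2^{\Theta(d\log n)}$-size Morton universe never costs a $\log d$. (2) For $\ell_1$, the axis-parallel cells must be applied after a few $1$-stable (Cauchy) random projections, so that a point of $B$ sharing $a$'s bucket is within a $\mathrm{polylog}(n)$- (and, crucially, $d$-independent) factor of $D(a)$; with $\OO(1)$ independent repetitions (take the minimum) this makes $\widetilde D(a)=\Theta(2^{\ell^*(a)})$ a valid crude estimate with $\gamma=\mathrm{polylog}(n)$. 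Total for (i): $\OO(nd\log\log n)$.

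For step (iii), instead of computing $D(a)$ exactly I would \emph{refine} the $\gamma$-factor bracket on $D(a)$ to a $(1+\eps)$-factor one for each sampled point, by binary-searching over $\OO(\log\tfrac\gamma\eps)=\OO(\log\log n+\log\tfrac1\eps)$ candidate radii, each step being an approximate $\ell_1$ range-emptiness query answered in $\OO(d)$ time from a finer grid/sketch over $B$ (built in $\OO(nd(\log\log n+\log\tfrac1\eps))$ time). Collecting the costs — the $\OO(nd\log\log n)$ crude step, the $\OO(nd(\log\log n+\log\tfrac1\eps))$ preprocessing, and the $\OO(m\cdot d(\log\log n+\log\tfrac1\eps))$ sampling stage — together with a variance-reduction that drives the effective $\gamma$ down to $\OO(\log\log n)$ (so $m=\OO(\log\log n/\eps^2)$), yields the claimed $\OO(nd(\log\log n+\log\tfrac1\eps)/\eps^2)$; extending to arbitrary coordinate precision uses the reduction of \cite{BIJ24} as in the appendix.

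The main difficulty, I expect, is twofold, and in both cases it amounts to avoiding a stray $\log n$. First, the $B$-side data structure has to be both \emph{built} in $\OO(nd)$ and \emph{queried} in $\OO(d)$ per scale-probe: a naive approach either materializes all $L$ grids ($\OO(ndL)$) or runs van~Emde~Boas over the full Morton space ($\OO(\log d)$ per probe), so one must argue that only $\OO(\log n)$ levels are ever active and that the lowest-non-empty-cell query reduces to predecessor search in a $\mathrm{poly}(n)$-size universe, answerable in $\OO(1)$ on the Word RAM. Second, the variance analysis must yield $\gamma=\OO(\mathrm{polylog}\,n)$ — ideally $\OO(\log\log n)$ — with \emph{no} $\mathrm{poly}(d)$ factor: the obstruction is the mismatch between $\ell_1$ balls and axis-parallel cubes (a side-$s$ cube has $\ell_1$-diameter $ds$), and showing that the stable-projection construction simultaneously (a) removes the $d$-dependence, (b) remains monotone across scales so the binary search is legitimate, and (c) is compatible with the $\OO(nd)$ build time is the technical heart of the argument.
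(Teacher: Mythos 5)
Your high-level skeleton (crude estimates from a nested randomly-shifted quadtree, importance sampling, then refinement of the sampled distances) is the same as the paper's, and you correctly identify that nested grids plus Word-RAM bit manipulation are what kill the $\log n$ in the crude-estimation step. However, there are two genuine gaps. The fatal one is step (iii): you assume that after near-linear preprocessing one can answer a $(1+\eps)$-approximate $\ell_1$ range-emptiness (equivalently, approximate nearest-neighbor) query in $\OO(d)$ time, and binary-search over $\OO(\log\log n+\log\frac1\eps)$ radii per sampled point. No such data structure is known in high dimensions --- this is essentially the high-dimensional ANN problem with constant query overhead, and if you had it the whole theorem would be easy. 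The paper's route is different: it runs a depth-$2$ \emph{batch} tournament (with $r=\OO(\log t+\log\log n)$ Cauchy projections) whose cost is $\Omega(n r)$ \emph{per batch}, not $\OO(d)$ per query, and which returns only $2$-approximations; it then uses \emph{rejection sampling} against these $2$-approximations to convert the $\OO(\log n/\eps^2)$ first-stage samples into $\OO(1/\eps^2)$ samples from an $\OO(1)$-distorted distribution, for which exact nearest neighbors are computed by brute force in $\OO(nd/\eps^2)$ total. This two-stage structure is also what avoids the $1/\eps^4$ you would incur by asking the tournament itself for $(1+\eps)$-accuracy.

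The second gap is the distortion of the crude estimator. You claim $\gamma=\mathrm{polylog}(n)$, ``ideally $\OO(\log\log n)$,'' via Cauchy projections; neither is substantiated, and neither is needed --- the paper lives with $\gamma=\OO(\min(d,\log n))$ and absorbs it into the first-stage sample count. More importantly, a \emph{single} nested quadtree does not achieve expected distortion $\OO(\min(d,\log n))$: the level sum $\sum_k \Pr[\text{collision at level }k]\cdot 2^k$ does not converge with the one-tree bound $\Pr\le \opt/2^{k-1}$. The paper uses \emph{two} independently shifted nested quadtrees and takes the closer of the two retrieved points, so the relevant probability becomes the product $(\opt/2^{k-1})^2$ and the sum telescopes geometrically (Theorem~\ref{Theorem:quadtree}). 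Your ``$\OO(1)$ independent repetitions'' gestures at this but treats it as a constant-factor boost rather than as the step that makes the expectation bound true at all. Finally, your claimed $\OO(nd)$ build of Morton codes glosses over exactly the technical content of the paper's $\texttt{Transpose}$/$\texttt{Concatenate}$ subroutines: interleaving the bits of $d$ coordinates is a $d\times\OO(\log n)$ bit-matrix transpose, which the paper shows how to do in $\OO(d\log\log n)$ word operations per point; an $\OO(d)$ bound is not justified (though $\OO(nd\log\log n)$ would still suffice for the final theorem).
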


Thus, we reduce the gap between upper and lower bounds from $\OO(\log (n)/\eps^2)$ to \\ $\OO(\log\log n +\log\frac{1}{\eps})/\eps^2)$.

\subsection{Our techniques}
Our result is obtained by identifying and overcoming the bottlenecks in the previous algorithm ~\cite{BIJ24}. On a high level, that algorithm consists of two steps, described below. For the sake of exposition, in what follows we assume that the target approximation factor $1+\eps$ is some constant.

\ \\
{\bf Outline of the prior algorithm:} In the first step, for each point $a \in A$, the algorithm computes an estimate $\calD_a$ of the distance $\opt_a$ from $a$ to its nearest neighbor in $B$. The estimate is $\OO(\log n)$-approximate, meaning that we  $\opt_a \le \calD_a \le \OO(\log n) \opt_a$. This is achieved as follows. First,  the algorithm imposes $\OO(\log n)$ grids of sidelength $1,2, 4, \ldots$, and maps each point in $B$ to the corresponding cells. Then, for each $a$, it identifies the finest grid cell containing both $a$ and some point $b \in B$.
Finally, it uses the distance between $a$ and $b$ as an estimate $\calD_a$. To ensure that this process yields an $\OO(\log n)$-approximation, each grid needs to be {\em independently} shifted at random. We emphasize that this independence between the shifts of different grids is {\em crucial} to ensure the $\OO(\log n)$-approximation guarantee - the more natural approach of using ``nested grids'' does not work. The whole process takes $\OO(nd)$ time per grid, or $\OO(nd \log n)$ time overall.

In the second step, the algorithm estimates the Chamfer distance via {\em importance sampling.} Specifically, the algorithm samples $T$ points from $A$, such that the probability of sampling $a$ is proportional to the estimate $\calD_a$. For each sampled point $a$, the distance $\opt_a$ from $a$ to its nearest neighbor in $B$ is computed directly in $\OO(nd)$ time. The final estimate of the Chamfer distance is equal to the weighted average the $T$ values $\opt_a$ . It can be shown that if the number of samples $T$ is equal to the distortion $\OO(\log n)$ of the estimates $\calD_a$, this yields a constant factor approximation to the Chamfer distance from $A$ to $B$. The overall cost of the second step is $\OO(T nd) = \OO(nd \log n)$, i.e., asymptotically the same as the cost of the first step. 

\ \\
{\bf Intuitions behind the new algorithm:}  To improve the running time, we need to reduce the cost of each of the two steps. In what follows we outline the obstacles to this task and how they can be overcome. 

\ \\
{\em Step 1: } The main difficulty in reducing the cost of the first step is that, for each grid,  the point-to-cell assignment takes $\OO(nd)$ time to compute, so computing these $T$ assignments separately for each grid takes $\OO(nd T)$ time. And,  since each grid is independently translated by a different random vector,  the grids are not nested, i.e., a  (smaller) cell of side length $2^i$ might contain points from many  (larger) cells of side length $2^{i+1}$.
As a result,  is unclear how to reuse the point-to-cell assignment in one grid to speedup the assignment in another grid, while computing them separately takes $\OO(ndT)$ time. 

To overcome this difficulty, we abandon independent shifts and resort to $\OO(\log n)$ {\em nested} grids. 
Such grids can be viewed as forming a {\em quadtree} with $\OO(\log n)$ levels, where any cell $C$ at level $i+1$ (i.e., of side length $2^{i+1}$) is connected to $2^d$ cells at level $i$ contained in $C$.
(Note that the root node of the quadtree has the highest level $\OO(\log n)$).
Although using a single quadtree increases the approximation error, we show that using {\em two} independently shifted quadtrees retains the $\OO(\log n)$ approximation factor. That is, we repeat the process of finding the finest grid cell containing both $a$ and some point from $B$ twice, and return the point in $B$ that is closer to $a$. This amplifies the probability of finding a point from $B$ that is ``close'' to $a$, which translates into a better approximation factor compared to using a single quadtree.

We still need show that the point-to-cell assignments can be computed efficiently. To this end, we observe that for each point $a$, its assignment to all $\OO(\log n)$ nested grids can be encoded as $d$ words of length $\OO(\log n)$, or a $d \times \OO(\log n)$ bit matrix $M$. Each row corresponds to one of the $d$ coordinates, and the most significant bit of a row indicates the assignment to cells at the highest level (i.e. cells with the largest side length) with respect to that coordinate. In other words, the most significant bits of all coordinates are packed into the first column, etc. We observe that two points $a$ and $b$  lie in the same cell of side length $2^{i}$ if and only if their matrices agree in all but the last $i$ columns.
If we {\em transpose} $M$ and read the resulting matrix in the row-major order, then finding a point $b \in B$ in the finest grid cell containing $a$ is equivalent to finding $b$ that shares the longest common prefix with $a$. We show that this transposition can be done using $\OO( \log n \cdot \log \log n)$ simple operations on words, yielding $\OO(n \log n \cdot \log \log n) = \OO(n d \cdot \log \log n)$ time overall.

As an aside, we note that quadtree computation is a common task in many geometric algorithms~\cite{har2011geometric}. Although an $\OO(n)$ algorithm for this task was known for constant dimension $d$ ~\cite{C08}\footnote{Assuming that each coordinate can be represented using $\log n$ bits.},  to the best of our knowledge our algorithm is the first to achieve $\OO(n d \cdot \log \log n)$ time for arbitrary dimension. 

\ \\
{\em Step 2:} At this point we computed estimates $\calD_a$ such that $\opt_a \le \calD_a \le \OO(\log n) \opt_a$. Given these estimates, importance sampling still requires sampling $\Omega(\log n)$ points. Therefore, we improve the running time by {\em approximating} (up to a constant factor) the values $\opt_a$, as opposed to computing them exactly. This is achieved by computing $\OO(\log \log n)$ random projections of the input points, which ensures that that the distance between any fixed pair of points is well-approximated with probability $1-1/\mbox{poly}(\log n)$. We then employ these projections in a variant of the tournament algorithm of~\cite{K97} which computes $\OO(1)$-approximate estimates of $\opt_a$ for $\OO(\log n)$ sampled points $a$ in $\OO(nd \log \log n)$ time. Since the algorithm of \cite{K97} works for the $\ell_2$ metric as opposed to the $\ell_1$ metric, we replace Gaussian random projections with Cauchy random projections, and re-analyze the algorithm. 

This completes the overview of an $\OO(nd \log \log n)$-time algorithm for estimate the Chamfer distance up to a {\em constant} factor. To achieve a $(1+\eps)$-approximation guarantee for any $\eps>0$, we proceed as follows. 
 First, instead of sampling $\OO(\log n)$ points as before, we sample $\OO(\log(n)/\eps^2)$ points $a$. Then, we use the tournament algorithm to compute $\OO(1)$-approximations to $\opt_a$, as before. \footnote{Note that we could use the tournament algorithm to report $(1+\eps)$-approximate answers, but then the dependence of the running time on  $1/\eps$ would become {\em quartic}, as  the $1/\eps^2$ term in the sample size would be multiplied by another $1/\eps^2$ term in the bound for the number of projections needed to guarantee that the tournament algorithm returns $(1+\eps)$-approximate answers. } Then we use a technique called {\em rejection sampling} to simulate the process of sampling $\OO(1/\eps^2)$ points $a$ with probability proportional to $\Theta(\opt_a)$. For each such point, we compute $\opt_a$ exactly in $\OO(nd)$ time. Finally, we use the $\OO(1/\eps^2)$ sampled points $a$ and the exact values of $\opt_a$ in importance sampling to estimate the Chamfer distance up to a factor of $1+\eps$.

This concludes the overview of our algorithm for the Chamfer distance under the $\ell_1$ metric. We remark that \cite{BIJ24} also extends their result from the $\ell_1$ metric to the $\ell_2$ metric by first embedding points from $\ell_2$ to $\ell_1$ using random projections. This takes $\OO(nd \cdot \log n)$ time, which exceeds the runtime of our algorithm, eliminating our improvement. However, a faster embedding method would yield an improved runtime for the Chamfer distance under the $\ell_2$ metric. We leave finding a faster embedding algorithm as an open problem.


\section{Preliminaries}

In this paper, we consider the regime where the approximation factor $\eps \geq \sqrt{\frac{\log n}{n}}$. Note that otherwise, an $\OO(nd/\eps^2)$ time bound would be close to the runtime of a naive exact computation.

In the proof of Theorem~\ref{t:main}, we assume a Word RAM model where both the input coordinates and the memory/processor word size is $\OO(\log n)$ bits.
This model is particularly important in procedures $\texttt{Concatenate}$ and $\texttt{Transpose}$, where we rely on the fact that we can shift bits and perform bit-wise AND, ADD and OR operations in constant time.

\ \\
{\bf Notation:} For any integers $a \geq 1$, we use $[n]$ to denote the set of all integers from $1$ to $n$. For any two real numbers $a, b$ such that $a \leq b$, we use $[a, b]$ to denote the set of all reals from $a$ to $b$. Let $d$ be the dimension of points. 

For any $q \in \R^d$, define $\opt^P_q := \min_{p \in P}\lVert q-p\rVert_1$ for some subset $P$ of $\R^d$. We will omit the superscript $P$ when it is clear in the context. 
\section{Quadtree}

In Figure \ref{Figure:QuadTree}, we show an algorithm $\QuadTree$ that outputs crude estimations of the nearest distances simulatenously for a set of points. The estimation guarantee is the same as the $\mathtt{CrudeNN}$ algorithm in \cite{BIJ24}. While \cite{BIJ24} achieves this using a quadtree with $\log n$ {\it independent} levels, which naturally introduce a $\log n$ runtime overhead, we show that two compressed quadtrees with dependent levels suffice. Our construction of compressed quadtrees is a generalization of \cite{C08} to high dimensions.

\begin{figure}[ht!]
	\begin{customizedFrame}{\QuadTree}
		
		\begin{flushleft}
			\noindent {\bf Input:} Two size-$n$ subsets $Q := \{q_i\}_{i \in [n]}$ and $P := \{p_i\}_{i \in [n]}$ of a metric space $(\R^d, \lVert \cdot \rVert_1)$, such that $Q, P \subset [0, \alpha]^d$ for some bound $\alpha = \poly(n)$.
			
			\noindent {\bf Output:} A set of $n$ values $\{\calD_i\}_{i \in [n]}$, such that every $\calD_i \in \R$ satisfies $\calD_i \geq \opt^P_{q_i}$.

            \begin{enumerate}
				\item\label{Line:draw} Let $t = \lceil\log (\alpha)\rceil+1$. Sample two uniformly random points $z, z' \sim [0, 2^{t-1}]^d$. For any point $x \in [0, \alpha]^d$, define \[h(x) := (\lceil\vec{x}_1+\vec{z}_1\rceil, \lceil\vec{x}_2+\vec{z}_2\rceil, \cdots, \lceil\vec{x}_d+\vec{z}_d\rceil),\]
                \[h'(x) := (\lceil\vec{x}_1+\vec{z'}_1\rceil, \lceil\vec{x}_2+\vec{z'}_2\rceil, \cdots, \lceil\vec{x}_d+\vec{z'}_d\rceil),\]

                where $\vec{x}_i, \vec{z}_i, \vec{z'}_i$ are the $i$-th coordinates of $x, z, z'$, respectively.

                \item\label{Line:transpose} For each $x \in Q \cup P$:

                \begin{itemize}
                    \item Compute $h(x)$ and write each element of $h(x)$ as a $t$-bit binary string. Then $h(x)$ can be viewed as a $d$-by-$t$ binary matrix stored in the row-major order, whose $(i, j)$-th entry is the $j$-th significant bit of the $i$-th element of $h(x)$. 
                Transpose this matrix and concatenate the rows of the transpose. Denote the resulting binary string as $h(x)^\top$. 

                \item Similarly, compute $h'(x)^\top$.
                \end{itemize}

                \item Use $h(x)^\top$ as keys to sort all $x \in Q \cup P$. Also, use $h'(x)^\top$ as keys to sort all $x \in Q \cup P$.
                
                \item For each $q_i \in Q$:
                
                \begin{itemize}
                    \item Use the sort to find a $p \in P$ that maximizes the length $l$ of the longest common prefix of $h(q_i)^\top$ and $h(p)^\top$. Similarly, find a $p' \in P$ that maximizes the length $l'$ of the longest common prefix of $h'(q_i)^\top$ and $h'(p')^\top$.

                    \item If $l \geq l'$ then output $\mathcal{D}_i := \lVert q_i - p\rVert_1$; otherwise, output $\mathcal{D}_i := \lVert q_i - p'\rVert_1$.
                \end{itemize}
            \end{enumerate}

		\end{flushleft}
	\end{customizedFrame}
	\caption{The $\QuadTree$ Algorithm.}\label{Figure:QuadTree}
\end{figure}

\ \\
{\bf Correctness:} 
For any $x \in [0, \alpha]^d$ and any integer $k$ such that $0 \leq k \leq t$, let $h_k(x) := (\lceil\frac{\vec{x}_1+\vec{z}_1}{2^k}\rceil, \lceil\frac{\vec{x}_2+\vec{z}_2}{2^k}\rceil, $ $\cdots, \lceil\frac{\vec{x}_d+\vec{z}_d}{2^k}\rceil)$, where $z$ is the random point drawn on Line \ref{Line:draw} in Figure \ref{Figure:QuadTree}. 
Observe that $h_k(x)$ is related to the prefix of $h(x)^\top$.

\begin{clm}\label{Claim:prefix}
    Let $q, p \in [0, \alpha]^d$ be arbitrary. For any integer $k$ such that $0 \leq k \leq t$, $h_k(q) = h_k(p)$ if and only if $h(q)^\top$ and $h(p)^\top$ share a common prefix of length at least $d(t-k)$.
\end{clm}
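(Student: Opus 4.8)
The plan is to reduce \emph{both} sides of the asserted equivalence to the same coordinatewise bit condition, namely: for every coordinate $i\in[d]$, the high-order $t-k$ bits of the $t$-bit encodings of $h(q)_i$ and $h(p)_i$ coincide. The claim then follows by chaining the two reductions.

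First I would record how $h_k$ is obtained from $h$ by bit truncation. Using the identity $\lceil a/2^k\rceil=\lceil\lceil a\rceil/2^k\rceil$, valid for every real $a\ge 0$ and integer $k\ge 0$ (one line: an integer $j$ satisfies $j\,2^k\ge a$ iff $j\,2^k\ge\lceil a\rceil$, so the least such $j$ is the same), we get $h_k(x)_i=\lceil h(x)_i/2^k\rceil$ for each $i$. Then, via the integer identity $\lceil n/2^k\rceil=\lfloor(n-1)/2^k\rfloor+1$, the value $h_k(x)_i$ is a function of $\lfloor (h(x)_i-1)/2^k\rfloor$, i.e.\ of the high-order $t-k$ bits of the $t$-bit encoding the algorithm uses for $h(x)_i$. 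Hence $h_k(q)=h_k(p)$, as vectors in $\R^d$, is equivalent to ``$h(q)_i$ and $h(p)_i$ agree in their high-order $t-k$ bits, for every $i\in[d]$''.

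Next I would unwind the transpose-and-concatenate step. Writing $h(x)$ as the $d\times t$ matrix $M$ with $M_{ij}$ the $j$-th most significant bit of $h(x)_i$, the transpose $M^\top$ has $j$-th row $(M_{1j},\dots,M_{dj})$ — the $j$-th most significant bits of all $d$ coordinates — so reading $M^\top$ in row-major order gives $h(x)^\top=(\text{all 1st bits})\,\|\,(\text{all 2nd bits})\,\|\cdots\|\,(\text{all }t\text{-th bits})$. Thus for any $0\le m\le t$ the length-$dm$ prefix of $h(x)^\top$ is exactly the block consisting, for $j=1,\dots,m$, of the $j$-th most significant bits of every coordinate. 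Since both strings have the common length $dt\ge d(t-k)$, ``longest common prefix $\ge d(t-k)$'' is the same as ``the first $d(t-k)$ bits agree'', which by the display above is exactly ``for every $i\in[d]$ and every $j\le t-k$, the $j$-th bit of $h(q)_i$ equals that of $h(p)_i$'' — precisely the coordinatewise condition of the previous paragraph. Combining the two equivalences proves the claim; the boundary cases $k=0$ (the full strings, i.e.\ $h(q)=h(p)$) and $k=t$ (empty prefix, a single level-$t$ cell) come out consistently.

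The step I expect to need the most care is the first: pinning down the exact correspondence between ``$h_k(q)_i=h_k(p)_i$'' and ``the high-order $t-k$ bits of the encodings of $h(q)_i,h(p)_i$ coincide''. This is where the $\lceil\cdot\rceil$-versus-$\lfloor\cdot\rfloor$ bookkeeping lives — the $\lceil n/2^k\rceil=\lfloor(n-1)/2^k\rfloor+1$ identity, together with the fact that the level-$k$ cells induced by $h_k$ are nested (each level-$k$ cell splitting into two level-$(k-1)$ cells), so the encodings read as a path in a binary tree with $d$ branch bits per level. One should also dispatch the probability-zero event that some $\vec{x}_i+\vec{z}_i$ is an integer; outside it the ceilings behave as $\lfloor\cdot\rfloor+1$, and in any case the integer identity above lets the argument go through verbatim. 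The transpose bookkeeping in the third paragraph is conceptually routine once the indexing conventions ($j$-th \emph{most} significant bit, row-major read of the transpose) are fixed.
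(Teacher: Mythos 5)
Your reduction is the same one the paper uses: both directions of the equivalence are collapsed to the single coordinatewise statement that the high-order $t-k$ bits of the stored encodings of $h(q)_i$ and $h(p)_i$ agree for every $i\in[d]$, and the transpose/row-major bookkeeping identifies the length-$d(t-k)$ prefix of $h(\cdot)^\top$ with exactly those bits. The paper's proof asserts this identification in two sentences; yours is more explicit and, usefully, isolates where the ceiling arithmetic actually lives.

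However, the step you yourself flag as delicate contains an off-by-one that does not go through as written. You correctly establish $h_k(x)_i=\lceil h(x)_i/2^k\rceil=\lfloor (h(x)_i-1)/2^k\rfloor+1$, i.e.\ $h_k(x)_i$ is determined by the high-order $t-k$ bits of $h(x)_i-1$. But the algorithm encodes $h(x)_i$ itself, whose high-order $t-k$ bits give $\lfloor h(x)_i/2^k\rfloor$, and these two quantities induce partitions of the integers that are shifted by one. Concretely, if $h(q)_i=2^k$ and $h(p)_i=2^k+1$, the length-$(t-k)$ bit prefixes of the encodings agree while $h_k(q)_i=1\neq 2=h_k(p)_i$; if instead $h(p)_i=2^k-1$, then $h_k(q)_i=h_k(p)_i=1$ while the prefixes differ. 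So the asserted equivalence fails exactly at cell boundaries, and this is not a negligible event (it occurs whenever some $\lceil\vec{x}_i+\vec{z}_i\rceil$ is a multiple of $2^k$, which has probability roughly $2^{-k}$ per coordinate). In fairness, the paper's two-line proof silently makes the identical identification, and the claim as literally stated with ceilings has the same defect; the repair is to encode $h(x)_i-1$ (equivalently, to define $h$ and $h_k$ with floors), after which your argument --- and the paper's --- goes through verbatim. You should state that repair explicitly rather than asserting that the integer identity ``lets the argument go through verbatim,'' which it does not for the encoding the algorithm actually writes down.
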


\begin{proof}
    If $h(q)^\top$ and $h(p)^\top$ share a common prefix of length at least $d(t-k)$, then in hashes $h(q)$ and $h(p)$, the first $(t-k)$ bits of all $d$ coordinates are the same. $h_k(q)$ and $h_k(p)$ compute exactly these bits, thus $h_k(q) = h_k(p)$. The reverse direction holds symmetrically.
\end{proof}

Claim \ref{Claim:prefix} justifies using $h_k(\cdot)$'s as an alternative representation of the binary string $h(\cdot)^\top$. \cite{BIJ24} shows that $h_k$ has a locality-sensitive property, which will help us bound the distance between points.






\begin{clm}[Lemma A.4 of \cite{BIJ24}]\label{Claim:cite}
    For any fixed integer $k$ such that $0 \leq k \leq t$ and any two points $q, p \in [0, \alpha]^d$,
\[
\pr{h_k(q) \neq h_k(p)} \leq \frac{\lVert q- p \rVert_1}{2^k},
\]\[
\pr{h_k(q) = h_k(p)} \leq \exp{(-\frac{\lVert q- p \rVert_1}{2^k})},
\]
where the probabilities are over the random choice of $z$.
\end{clm}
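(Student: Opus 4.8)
The plan is to reduce the claim to a coordinate-by-coordinate, one-dimensional analysis and then recombine the coordinates. Observe first that $h_k(q) = h_k(p)$ if and only if $h_k(q)_i = h_k(p)_i$ for every coordinate $i \in [d]$, where $h_k(q)_i = \lceil (\vec q_i + \vec z_i)/2^k \rceil$ is the $i$-th coordinate of $h_k(q)$ (and similarly for $p$); and since the coordinates $\vec z_1, \dots, \vec z_d$ of the random shift are drawn independently, these $d$ events are mutually independent. So it suffices to control, for each $i$, the probability $p_i := \pr{h_k(q)_i \neq h_k(p)_i}$, where the randomness is over $\vec z_i$ only.

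For the one-dimensional step, assume without loss of generality that $\vec q_i \le \vec p_i$ and write $L_i := \vec p_i - \vec q_i = |\vec q_i - \vec p_i|$. The map $y \mapsto \lceil y/2^k \rceil$ partitions the real line into consecutive half-open cells of width $2^k$, so $h_k(q)_i \neq h_k(p)_i$ precisely when some integer multiple of $2^k$ falls in the interval $[\vec q_i + \vec z_i,\ \vec p_i + \vec z_i)$, which has length $L_i$. Since $\vec z_i$ is uniform on $[0, 2^{t-1}]$ and $2^{t-1}$ is an integer multiple of the cell width $2^k$ (this holds whenever $k \le t-1$, the range that matters), the offset $\vec z_i \bmod 2^k$ is uniform on $[0, 2^k)$, and an elementary computation gives $p_i = \min\{1,\ L_i/2^k\}$. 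In particular $p_i \le L_i/2^k$ always, and $p_i = L_i/2^k$ whenever $L_i \le 2^k$.

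It remains to recombine. For the first inequality, a union bound over $i \in [d]$ yields
\[
\pr{h_k(q) \neq h_k(p)} \;\le\; \sum_{i=1}^d p_i \;\le\; \sum_{i=1}^d \frac{L_i}{2^k} \;=\; \frac{\lVert q - p \rVert_1}{2^k}.
\]
For the second inequality, independence across coordinates gives $\pr{h_k(q) = h_k(p)} = \prod_{i=1}^d (1 - p_i)$; if some $p_i = 1$ the product is $0$ and the bound holds trivially, and otherwise each $p_i = L_i/2^k$, so applying $1 - x \le e^{-x}$ termwise,
\[
\prod_{i=1}^d (1 - p_i) \;\le\; \prod_{i=1}^d e^{-L_i/2^k} \;=\; \exp\!\Bigl(-\frac{\lVert q - p \rVert_1}{2^k}\Bigr).
\]

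The only place that needs care, and hence the main (and rather minor) obstacle, is the one-dimensional identity $p_i = \min\{1, L_i/2^k\}$: one has to check that $\lceil \cdot/2^k\rceil$ induces half-open cells of width exactly $2^k$, that a uniformly random translation separates the two endpoints of a length-$L_i$ segment with probability exactly $\min\{1, L_i/2^k\}$, and that the support $[0, 2^{t-1}]$ of $\vec z_i$ is a multiple of $2^k$ so that the induced offset is genuinely uniform modulo $2^k$. Everything downstream is just the union bound and the convexity inequality $1 - x \le e^{-x}$. (Since this statement is Lemma~A.4 of \cite{BIJ24}, one could alternatively just cite it verbatim.)
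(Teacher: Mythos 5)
The paper never proves this claim itself --- it imports it verbatim as Lemma A.4 of \cite{BIJ24} --- so your argument is not competing with an in-paper proof but supplying the missing one, and it is correct. It is the standard route: independence of the $d$ coordinates of $z$, the exact one-dimensional separation probability $p_i=\min\{1,L_i/2^k\}$ for a grid of cell width $2^k$ shifted by an offset that is uniform modulo $2^k$, then a union bound over coordinates for the first inequality and $\prod_i(1-p_i)\le\exp(-\sum_i L_i/2^k)$ for the second. You are also right that the only delicate point is the uniformity of $\vec z_i \bmod 2^k$, which holds exactly because the support length $2^{t-1}$ is an integer multiple of $2^k$, i.e.\ for $k\le t-1$. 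At the excluded endpoint $k=t$ the second inequality is in fact false as stated: since $q_i+z_i\in[0,2^t]$, both hashes equal the all-ones vector with probability $1$ for $q\neq p$, while the right-hand side is strictly less than $1$. (The first inequality survives at $k=t$ because the true separation probability only decreases when the offset is not fully uniform.) This is a defect of the cited lemma at a single boundary scale rather than of your proof, and the paper's downstream uses of the exponential bound are at scales $k_0$ strictly below $\log\lVert q-p\rVert_1$; still, it is worth recording that your derivation, unlike a bare citation, makes the needed hypothesis $k\le t-1$ explicit.
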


We now show that if two points have the same hash $h_k$, then their distance is likely not too much greater than $2^k$. A straight-forward bound follows from the diameter of the $d$-dimensional cube.

\begin{lem}\label{Lemma:cube}
    For all $q \in Q$, $p \in P$, and $0 \leq k \leq t$, the following always holds: If $h_k(q) = h_k(p)$ then $\lVert  q- p \rVert_1  \leq 2^k\cdot d$.
\end{lem}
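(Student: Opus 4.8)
The plan is to reduce the claim to a per-coordinate estimate and then sum. First I would write out what $h_k(q) = h_k(p)$ means: by the definition of $h_k$, for every coordinate $i \in [d]$ we have $\lceil \frac{\vec{q}_i + \vec{z}_i}{2^k} \rceil = \lceil \frac{\vec{p}_i + \vec{z}_i}{2^k} \rceil$. Denoting this common integer by $m_i$, both $\frac{\vec{q}_i + \vec{z}_i}{2^k}$ and $\frac{\vec{p}_i + \vec{z}_i}{2^k}$ lie in the half-open interval $(m_i - 1, m_i]$, which has length $1$, so the absolute value of their difference is at most $1$; after the $\vec{z}_i$ terms cancel and we multiply through by $2^k$, this gives $|\vec{q}_i - \vec{p}_i| \le 2^k$.

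Next I would sum this bound over the $d$ coordinates: $\lVert q - p \rVert_1 = \sum_{i=1}^d |\vec{q}_i - \vec{p}_i| \le d \cdot 2^k$, which is exactly the asserted inequality. Note that no property of the random shift $z$ is used, which is consistent with the ``always holds'' phrasing of the statement — the shift cancels coordinate-wise, so the estimate is deterministic.

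I do not expect a real obstacle here: the lemma is the deterministic worst-case counterpart of the probabilistic locality bound in Claim~\ref{Claim:cite}, and the two will later be combined to control the outputs $\calD_i$ of $\QuadTree$ (the diameter bound is what prevents $\calD_i$ from being catastrophically large when a collision happens to occur). The only point worth stating carefully is the elementary fact that $\lceil a \rceil = \lceil b \rceil$ forces $a$ and $b$ into a common half-open interval of length one (so in fact $|a-b| < 1$ and the stated bound has a bit of slack), which is precisely where the half-open convention implicit in the ceiling function is used.
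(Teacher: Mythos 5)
Your proof is correct and is essentially the paper's argument: the paper observes that $h_k(q)=h_k(p)$ forces $q+z$ and $p+z$ into the same axis-aligned cube of side length $2^k$, whose $\ell_1$-diameter is $2^k d$, and your coordinate-wise ceiling computation is exactly the unpacked version of that diameter bound. No gap; the only difference is that you make the (correct) half-open-interval detail explicit.
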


\begin{proof}
    Observe that $h_k(q) = h_k(p)$ only if $q+z$ and $p+z$ are in the same $d$-dimensional cube of side-length $2^k$. The diameter of such a cube under the $\ell_1$ norm is $2^k\cdot d$. Therefore, for any $q, p$ and $0 \leq k \leq t$, $\lVert q - p \rVert_1 \leq 2^k \cdot d$ is a necessary condition for $h_k(q) = h_k(p)$ to hold.
\end{proof}

Moreover, using Claim \ref{Claim:cite}, we can bound this ratio with respect to $n$.

\begin{lem}\label{Lemma:logn}
With probability at least $1-\OO(1/n)$, the following holds simultaneously for all $q \in Q$, $p \in P$, and $0 \leq k \leq t$: If $h_k(q) = h_k(p)$ then $\lVert  q- p \rVert_1  \leq 2^k\cdot 3\log n$.
\end{lem}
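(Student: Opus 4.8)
The idea is to take a union bound over all pairs $(q,p)\in Q\times P$, but \emph{not} over the levels $k$; instead I exploit that the grids are nested, so that for a fixed pair the event $\{h_k(q)=h_k(p)\}$ is monotone nondecreasing in $k$. This monotonicity is exactly what lets us reach the claimed $\OO(1/n)$ rather than the $\OO(\log n /n)$ one would get from a naive union bound over the $t+1=\OO(\log n)$ levels as well.

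First I would record the monotonicity. By Claim \ref{Claim:prefix}, $h_k(q)=h_k(p)$ iff $h(q)^\top$ and $h(p)^\top$ share a common prefix of length at least $d(t-k)$; since $d(t-(k+1))<d(t-k)$, this implies $h_{k+1}(q)=h_{k+1}(p)$. Hence, fixing a pair $(q,p)$, the bad event ``$\exists k\in\{0,\dots,t\}:\ h_k(q)=h_k(p)$ and $\lVert q-p\rVert_1>2^k\cdot 3\log n$'' can be analyzed at a single level. If $\lVert q-p\rVert_1\le 3\log n$ there is nothing to prove, since then every level $k\ge 0$ is safe. Otherwise let $\bar k=\bar k(q,p)$ be the largest integer in $\{0,\dots,t\}$ with $2^{\bar k}\cdot 3\log n<\lVert q-p\rVert_1$ (this set of integers is nonempty because $\lVert q-p\rVert_1>3\log n$, so $k=0$ qualifies; we cap at $t$, which is needed since $\lVert q-p\rVert_1$ can be as large as $\alpha d>2^t\cdot 3\log n$). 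Every level $k$ for which the condition $\lVert q-p\rVert_1>2^k\cdot 3\log n$ holds satisfies $k\le\bar k$, and by the monotonicity above, if $h_k(q)=h_k(p)$ for such a $k$ then also $h_{\bar k}(q)=h_{\bar k}(p)$. So the bad event for the pair $(q,p)$ is contained in $\{h_{\bar k}(q)=h_{\bar k}(p)\}$.

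Next I bound that single event. By Claim \ref{Claim:cite},
\[
\pr{h_{\bar k}(q)=h_{\bar k}(p)}\le\exp\!\left(-\frac{\lVert q-p\rVert_1}{2^{\bar k}}\right)<\exp(-3\log n)=n^{-3},
\]
where the strict inequality uses $2^{\bar k}\cdot 3\log n<\lVert q-p\rVert_1$. Therefore the probability that a fixed pair $(q,p)$ is ``bad'' is at most $n^{-3}$. Finally, a union bound over the at most $\lvert Q\rvert\cdot\lvert P\rvert\le n^2$ pairs shows that, with probability at least $1-n^2\cdot n^{-3}=1-\OO(1/n)$, no pair is bad, which is precisely the statement.

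\textbf{Main obstacle.} The one place requiring care is the reduction to a single level per pair: a direct union bound over pairs \emph{and} levels costs an extra $t+1=\OO(\log n)$ factor and only yields failure probability $\OO(\log n/n)$. Using Claim \ref{Claim:prefix} to make $\{h_k(q)=h_k(p)\}$ monotone in $k$ — so that ``$\exists k$'' collapses to the threshold level $\bar k$ — is what removes this factor. A minor secondary point is the boundary bookkeeping for $\bar k$: checking it is a well-defined integer in $\{0,\dots,t\}$ (capping at $t$ for very distant pairs) and dismissing the trivial regime $\lVert q-p\rVert_1\le 3\log n$.
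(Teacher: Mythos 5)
Your proposal is correct and follows essentially the same route as the paper: reduce to the single threshold level $\bar k$ (the paper's $k_0$) via the monotonicity of $\{h_k(q)=h_k(p)\}$ in $k$, bound that event by $\exp(-3\log n)=n^{-3}$ using Claim~\ref{Claim:cite}, and union bound over the $n^2$ pairs. Your explicit handling of the trivial regime $\lVert q-p\rVert_1\le 3\log n$ and the cap of $\bar k$ at $t$ is slightly more careful bookkeeping than the paper's, but the argument is the same.
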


\begin{proof}
    We show the contrapositive that with probability $1-\OO(1/n)$, $k < \log(\lVert q- p \rVert_1 / 3\log n)$ implies $h_k(q) \neq h_k(p)$ simultaneously for all $q \in Q$ and $p \in P$. It suffices to argue that for any fixed pair of points $q \in Q$ and $p \in P$, this holds with probability at least $1-\OO(1/n^3)$. The lemma then follows by a union bound over $n^2$ pairs.

    Let $k_0$ denote the largest integer $k$ that satisfies $k < \log(\lVert q- p \rVert_1 / 3\log n)$. Then we have
    \[
    \pr{h_{k_0}(q) = h_{k_0}(p)} \leq \exp{(-\frac{\lVert q- p \rVert_1}{2^{k_0}})} \leq \exp(-3\log n).
    \]
    
    i.e., with probability at least $1-\OO(1/n^3)$, $h_{k_0}(q) \neq h_{k_0}(p)$. Also, it is easy to see that if $h_{k_0}(q) \neq h_{k_0}(p)$, then for all $k \leq k_0$, $h_{k}(q) \neq h_{k}(p)$, concluding the claim.
    
\end{proof}


Symmetrically, if we define $h'_k(x) := (\lceil\frac{\vec{x}_1+\vec{z'}_1}{2^k}\rceil, \lceil\frac{\vec{x}_2+\vec{z'}_2}{2^k}\rceil, \cdots, \lceil\frac{\vec{x}_d+\vec{z'}_d}{2^k}\rceil)$, the  claims and lemmas above also hold for $h'_k$.
Using these, we show that the expected outputs of the $\QuadTree$ algorithm are (crude) estimations of the nearest neighbor distances. 

\begin{theorem}\label{Theorem:quadtree}
    With probability at least $1 - \OO(1/n)$, it holds for all $q_i \in Q$ that $\e{\mathcal{D}_i} \leq 5 \min (d, 3\log n) \cdot \opt_{q_i}^P$.
\end{theorem}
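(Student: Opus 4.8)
The plan is to fix a query point $q_i$, write $r := \opt_{q_i}^P$, let $p^\star \in P$ be a nearest neighbor of $q_i$, and bound $\e{\calD_i}$ by controlling the \emph{level} of the cell from which the algorithm reads off $\calD_i$. Let $K$ be the smallest $k \in \{0,\dots,t\}$ with $h_k(q_i) = h_k(p)$ for some $p \in P$, and let $K'$ be the analogous quantity for $h'$. By Claim~\ref{Claim:prefix}, a point $p$ achieving the longest common prefix with $q_i$ among the first set of sort keys is precisely a point with $h_K(q_i) = h_K(p)$ (and symmetrically with $K'$ for the second set); moreover the comparison $l \ge l'$ in the algorithm forces $K \le K'$, while $l < l'$ forces $K' \le K$. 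Hence, in either branch, the algorithm outputs $\calD_i = \lVert q_i - p\rVert_1$ for some $p$ with $h_{\min(K,K')}(q_i) = h_{\min(K,K')}(p)$ in the corresponding tree. Combining Lemma~\ref{Lemma:cube} with Lemma~\ref{Lemma:logn} (applied to $h$ and, symmetrically, to $h'$), on an event $E$ with $\pr{E} \ge 1-\OO(1/n)$ we obtain the pointwise bound $\calD_i \le \min(d, 3\log n)\cdot 2^{\min(K,K')}$; modulo this high-probability event it then suffices to show $\e{2^{\min(K,K')}} \le 5r$.

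The tail of $\min(K,K')$ is where the two independent shifts enter. Since $K > k$ forces the level-$k$ cell of $q_i$ under $h$ to miss $p^\star$, Claim~\ref{Claim:cite} gives $\pr{K > k} \le r/2^k$, and symmetrically $\pr{K' > k} \le r/2^k$; as $z$ and $z'$ are independent, $\pr{\min(K,K') > k} \le \min(1, r^2/4^k)$. Writing $2^{\min(K,K')} = 1 + \sum_{j=1}^{\min(K,K')} 2^{j-1}$ and taking expectations, $\e{2^{\min(K,K')}} = 1 + \sum_{j \ge 1} 2^{j-1}\,\pr{\min(K,K') \ge j}$. I then split the sum around $j_0 \approx \log_2 r$: for $j \le j_0$ the trivial bound $\pr{\cdot} \le 1$ contributes $\OO(r)$ in total, and for $j > j_0$ the bound $\pr{\min(K,K') \ge j} \le r^2/4^{j-1}$ makes the tail a \emph{convergent} geometric series, $\sum_{j > j_0} 2^{j-1}\cdot r^2/4^{j-1} = \OO(r^2/2^{j_0}) = \OO(r)$. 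Tracking constants gives $\e{2^{\min(K,K')}} \le 1 + 4r$. This is exactly the place where two dependent-level quadtrees suffice while a single one would not: with only one shift the tail becomes $\sum_{j > j_0} 2^{j-1}\cdot r/2^{j-1}$, which converges only because $K \le t$, at the price of an extra $\Theta(\log n)$ factor.

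It remains to turn $1 + 4r$ into $5r$, which is where the integrality of the coordinates is used: $r = \opt_{q_i}^P \in \{0\} \cup [1,\infty)$. If $r \ge 1$ then $1 + 4r \le 5r$, so $\e{\calD_i} \le 5\min(d,3\log n)\cdot \opt_{q_i}^P$. If $r = 0$ then $q_i \in P$; the copy of $q_i$ sitting in $P$ shares the entire transposed key with $q_i$, so the longest common prefix is maximal, $\min(K,K') = 0$, and the output is $\calD_i = 0$, making the bound trivial. A union bound of the good events $E$ over all $q_i \in Q$ (already absorbed into the $\OO(1/n)$ slack) finishes the argument.

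The step I expect to demand the most care is the pointwise inequality $\calD_i \le \min(d,3\log n)\cdot 2^{\min(K,K')}$: one must translate ``longest common prefix of length $l$'' into ``smallest level $k$ with $h_k(q_i) = h_k(p)$'' via Claim~\ref{Claim:prefix} while handling the ceiling relating $l$ and $k$, correctly resolve the tie-break between the two trees, and be sure that Lemmas~\ref{Lemma:cube} and~\ref{Lemma:logn} are invoked on a single high-probability event that simultaneously covers all pairs in both trees. The geometric-series estimate and the $r = 0$ edge case are comparatively routine.
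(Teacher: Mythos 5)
Your proof is correct and follows essentially the same route as the paper's: condition on the event of Lemma~\ref{Lemma:logn} to get the pointwise bound $\calD_i \le \min(d,3\log n)\cdot 2^{\text{level}}$, then use the independence of $z$ and $z'$ to square the tail bound of Claim~\ref{Claim:cite} and sum a geometric series split at $\log \opt_{q_i}$; the only differences are bookkeeping (your tail-sum for $\e{2^{\min(K,K')}}$ versus the paper's partition into events $\mathcal{E}_k$) and your explicit appeal to $\opt_{q_i}\ge 1$ to absorb the additive $1$, an assumption the paper's proof also implicitly relies on (its $k^*=\lceil\log\opt_{q_i}\rceil$ must be nonnegative) and which holds in the integer-coordinate setting of the main theorem.
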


\begin{proof}
We assume the success case of Lemma \ref{Lemma:logn} for both $h_k$ and $h'_k$. Fix an arbitrary $q_i \in Q$. Recall that the $\QuadTree$ algorithm finds $p, p' \in P$ for $q_i$, which are associated with longest common prefixes of lengths $l, l'$, respectively. For integer $k: 0 \leq k \leq t$, let $\mathcal{E}_k$ denote the event $d(t-k) \leq \max{(l, l')} < d(t-k + 1)$. Observe from Claim \ref{Claim:prefix} that when $\mathcal{E}_k$ happens,
\begin{itemize}
    \item either $l \geq l'$ and $h_k(q_i) = h_k(p)$, 
    \item or $l' > l$ and $h'_k(q_i) = h'_k(p')$.
\end{itemize}

In both cases, we know from Lemma \ref{Lemma:cube} and \ref{Lemma:logn} that $\calD_i \leq 2^k \cdot \min (d, 3\log n)$.

Let ${{D}} := \min (d, 3\log n)$, $p^* := \argmin_{p \in P} \lVert q_i-p\rVert_1$, and $k^* := \lceil\log(\opt_{q_i})\rceil$. We have
    \begin{align*}
    \e{\mathcal{D}_i} &\leq \sum_{0 \leq k \leq t} \pr{\mathcal{E}_k} \cdot (2^k \cdot {{D}}) \\
    &\leq {{D}}(\sum_{0 \leq k \leq k^*} \pr{\mathcal{E}_k } \cdot \opt_{q_i}  + \sum_{k^* < k \leq t} \pr{h_{k-1}(q_i) \neq h_{k-1}(p^*) \wedge h'_{k-1}(q_i) \neq h'_{k-1}(p^*)} \cdot 2^k)
\end{align*}
where the second inequality holds because $\mathcal{E}_k$ implies that neither pair $\{h(q_i)^\top, h(p^*)^\top\}$ nor $\{h'(q_i)^\top, h'(p^*)^\top\}$ share a common prefix of length $\geq d(t-k+1)$. Thus $h_{k-1}(q_i) \neq h_{k-1}(p^*)
$ and $h'_{k-1}(q_i) \neq h'_{k-1}(p^*)$ by Claim \ref{Claim:prefix}. 

Moreover, events $\mathcal{E}_k$ for all $k$ form a partition of a sample space, so $\sum_k \pr{\mathcal{E}_k } \leq 1$.  Applying this and the locality sensitive properties of $h_{k-1}$ and $h'_{k-1}$, we get
\[ 
    \e{\mathcal{D}_i} \leq {{D}}(\opt_{q_i} + \sum_{k^* < k \leq t} (\frac{\opt_{q_i}}{2^{k-1}})^2 \cdot 2^k) 
    \leq {{D}}(\opt_{q_i} + 2\opt_{q_i}\sum_{k^* < k \leq t} \frac{\opt_{q_i}}{2^{k-1}}) 
    \leq 5{{D}} \cdot \opt_{q_i}
\]
\end{proof}
\ \\
{\bf Runtime analysis:}  


\begin{lem}[Line \ref{Line:transpose}]
    For any $x \in Q\cup P$, $h(x)^\top$ (and $h'(x)^\top$) can be computed in $\OO(d \log \log n)$ time.
    
\end{lem}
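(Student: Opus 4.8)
The plan is to bound the cost of turning each point $x$ into the transposed bitstring $h(x)^\top$, which amounts to computing $h(x)$ and then transposing a $d \times t$ bit matrix with $t = \OO(\log n)$. Computing $h(x)$ itself is cheap: each of the $d$ coordinates is an $\OO(\log n)$-bit integer, so adding $\vec z_i$ and taking a ceiling is $\OO(1)$ word operations per coordinate, for $\OO(d)$ total. The content of the lemma is the transposition. I would store $h(x)$ as $d$ words, word $i$ holding the $t$-bit string for coordinate $i$ (padded to a word). The goal is to produce the $t$ words of the transpose, where output word $j$ packs the $j$-th significant bit of every coordinate; concatenating these $t$ words in order gives $h(x)^\top$ (a string of $dt = \OO(d \log n)$ bits, i.e. $\OO(d)$ words).

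The key step is a standard recursive ``bit-matrix transpose'' that runs in $\OO(\log(\text{matrix side}))$ rounds, each round touching every word once. Concretely, think of the $d$ input words as an array indexed $0,\dots,d-1$, each holding a $t$-bit row. Pair up rows $2m$ and $2m+1$; using a mask that selects even-indexed bit positions together with a shift, interleave them so that the two resulting words contain, respectively, the even-indexed and odd-indexed ``columns'' restricted to that pair of rows. Repeating this $\log d$ times with masks of geometrically growing stride (and symmetrically $\log t$ times on the column side) regroups the bits into column-major order; this is the same divide-and-conquer used for transposing a word-sized bit matrix (the ``Hacker's Delight'' transpose), adapted to a $d$-row, $t$-column layout. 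Since $d$ and $t$ are both $\OO(\poly(n))$-bounded — in fact $t = \OO(\log n)$ and $d$ is at most $n$ — but the relevant quantity for the number of rounds is $\log t = \OO(\log\log n)$ when we transpose within the $\OO(\log n)$-bit words and $\OO(\log d / \log n)$ rounds of carrying between words. I would organize it so the dominant term is $\OO(\log t) = \OO(\log\log n)$ passes, each pass doing $\OO(d)$ word operations on the $\OO(d)$-word representation; because $t = \OO(\log n)$ fits in a constant number of machine words, the within-word part of the transpose costs $\OO(\log \log n)$ operations per word-block, giving $\OO(d \log\log n)$ overall. The final concatenation of the $t$ output columns into one $\OO(d)$-word string is a further $\OO(d)$ shifts and ORs, which is absorbed.

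I expect the main obstacle to be the bookkeeping of how the $d \times t$ matrix is laid out across machine words and getting the round count right: naively a $d \times t$ transpose costs $\OO(\log(dt))$ rounds of $\OO(d + t)$-word work, and we must verify this is $\OO(d \log\log n)$ rather than $\OO(d \log n)$. The point is that the expensive direction of the recursion is the $t$-side (columns), which has length only $\OO(\log n)$, so it contributes $\OO(\log\log n)$ rounds; the $d$-side regrouping can be done by simply relabelling which word is which (a permutation of the $d$ word-slots, free in the RAM model) plus $\OO(1)$ word operations, so it does not multiply the cost by $\log d$. Stating the transpose as operating on the $t$-bit rows packed into $\OO(1)$ words each, and handling the $d$ rows by a cheap index permutation, makes the $\OO(d\log\log n)$ bound transparent; I would include a short figure or pseudocode of the $\texttt{Transpose}$ and $\texttt{Concatenate}$ subroutines to make the word-operation count explicit, and the argument for $h'(x)^\top$ is identical.
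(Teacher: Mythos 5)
Your overall strategy coincides with the paper's: compute $h(x)$ in $\OO(d)$ word operations, then transpose the $d\times t$ bit matrix ($t=\Theta(\log n)$) by the standard masked-shift divide-and-conquer, exploiting that each $t$-bit row fits in $\OO(1)$ machine words so that the column dimension contributes only $\OO(\log t)=\OO(\log\log n)$ rounds of $\OO(d)$ word operations, while the row dimension is handled by decomposing into square blocks and permuting word-sized pieces. This is exactly the paper's recursive $\texttt{Transpose}$ subroutine and its analysis in the case $d\geq t$, where the transposed blocks are reassembled by concatenating $d$ pieces of $\approx\log n$ bits each in $\OO(d)$ time.

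The gap is in the regime $t>d$ (i.e.\ $d=o(\log n)$), and it sits precisely where you wave your hands: the ``final concatenation.'' There the transpose is a $t\times d$ matrix, i.e.\ $t=\Theta(\log n)$ rows of only $d$ bits each, and these $t$ short pieces must be packed contiguously into an $\OO(dt/\log n)=\OO(d)$-word string (a packed key is unavoidable, since otherwise the subsequent radix sort over $n$ keys would cost $\OO(n\log n)$ rather than $\OO(nd)$). Emitting the $t$ pieces one at a time costs $\Theta(t)=\Theta(\log n)$ shift/OR operations, which exceeds the claimed $\OO(d\log\log n)$ whenever $d\ll \log n/\log\log n$; your assertion that this step is ``$\OO(d)$ shifts and ORs'' is valid only when the pieces are near word-sized, i.e.\ when $d\geq t$. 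The paper closes this with a second recursive procedure, $\texttt{Concatenate}$, which pads and merges rows pairwise with geometrically doubling block widths over $\log d$ rounds (halving the number of rows each round), so that the packing also finishes within the stated bound. You name $\texttt{Concatenate}$ but supply no mechanism for it, so as written your argument establishes the lemma only for $d\geq t$.
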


\begin{proof}
    We assume without loss of generality that both $d, t$ are powers of $2$.
    Computing the binary matrix representation of $h(x)$ can be done in $\OO(d)$ time since $t = \OO(\log n)$. Given this, we compute $h(x)^\top$ as follows.

\ \\
{\em Case 1: $d \geq t$:}  We partition the matrix into $t$-by-$t$ square submatrices, denoted by 
\[ \mathsf{Matrix}(h(x)) := \underbrace{\begin{bmatrix}
        \begin{array}{c}
           M_1 \\
            \hline
            M_2\\
            \hline
            \vdots\\
            \hline
            M_{d/t}
        \end{array}
        \end{bmatrix}}_{\displaystyle t}\left.\vphantom{\begin{bmatrix}
        \begin{array}{c}
           M_1 \\
            \hline
            M_2\\
            \hline
            \vdots\\
            \hline
            M_{d/t}
        \end{array}
        \end{bmatrix}}\right\}d
        \] 
        For each $i \in [d/t]$, we use a recursive subroutine $\texttt{Transpose}(M_i, t)$ to compute $M_i^\top$. See Figure \ref{Figure:transposePic} for a pictorial illustration of the $\texttt{Transpose}$ algorithm.

        \begin{figure}
            \centering
            \includegraphics[width=0.95\linewidth]{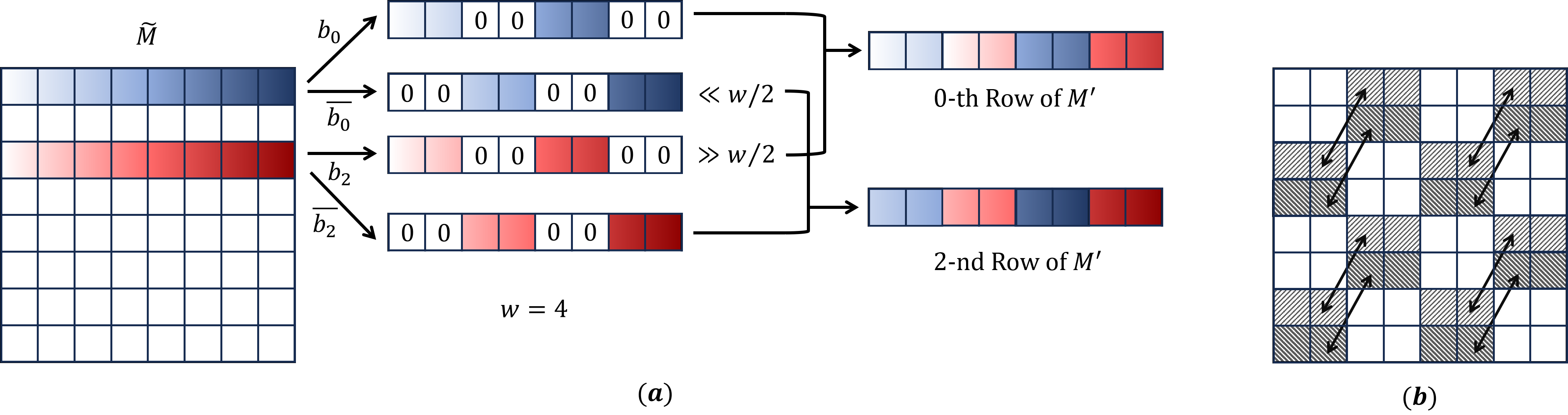}
            \caption{A pictorial example for $8$-by-$8$ square matrix and $w=4$. The left figure (a) shows how the $\texttt{Transpose}$ algorithm handles the rows of $\tilde{M}$ in lines \ref{Line:string} and \ref{Line:shift}. The right figure (b) illustrates the transpose outcome.}
            \label{Figure:transposePic}
        \end{figure}
\begin{figure}[ht!]
        \begin{customizedFrame}{\texttt{Transpose}}
		
		\begin{flushleft}
			\noindent {\bf Input:} An $I$-by-$J$ bit matrix $M$ where $I \leq J$ are powers of $2$. An integer $w$ that is a power of $2$ and $2 \leq w \leq I$.
			
			\noindent {\bf Output:} An $I$-by-$J$ matrix $M'$ such that if it is partitioned into $w$-by-$w$ square submatrices, then each submatrix is the transpose of the corresponding submatrix of $M$ at the same coordinates.

            \begin{enumerate}
            \item Let \[ \tilde{M} = \begin{cases}
                M & \text{if $w$=2}\\
                 \texttt{Transpose}(M, w/2) & \text{otherwise}
            \end{cases}
            \] be zero-indexed and $\tilde{M}[i, j]$ is its $(i, j)$-th entry.
            \item 
            For each integer $i$ such that $0 \leq i < I$:
            \begin{enumerate}
                \item\label{Line:string} Compute a $J$-bit binary string $b_i$ such that for $j: 0 \leq j < J$, its $j$-th bit $b_i[j] = \begin{cases}
                    \tilde{M}[i, j] &\text{if $(j \mod w) < w/2$} \\
                    0 &\text{otherwise}
                \end{cases}$.
                
                Also, compute a string  $\overline{b_i}[j] = \begin{cases}
                    0 &\text{if $(j \mod w) < w/2$} \\
                    \tilde{M}[i, j] &\text{otherwise}
                \end{cases}$.
            \end{enumerate}
            
            \item 
            Define an $I$-by-$J$ matrix $M'$, such that for each integer $0 \leq i < I$:
            \begin{enumerate}
                \item\label{Line:shift}   
                Let the $i$-th row of $M'$ be $\begin{cases}
                    b_i + b_{i+(w/2)} \gg (w/2) & \text{if $(i \mod w) < w/2$} \\
                    \overline{b_i} + \overline{b_{i-(w/2)}} \ll (w/2) & \text{if $(i \mod w) \geq w/2$}
                \end{cases}$,
                
                where $\gg (w/2)$ (resp. $\ll$) denote the operation of shifting a string to the right (resp. left) by $w/2$ bits.
                
            \end{enumerate}

            \item Output $M'$.
            \end{enumerate}
		\end{flushleft}
	\end{customizedFrame}
\end{figure}
    The correctness of the $\texttt{Transpose}$ algorithm can be shown by induction on (the base-$2$ logarithm of) $w$. When $I = J = t = \OO(\log n)$, Line \ref{Line:string} and \ref{Line:shift} can be done using a constant number of operations on words. Thus we get the following runtime.

    \begin{clm}
        Assuming $t = \OO(\log n)$ the procedure $\texttt{Transpose}(M_i, t)$ runs in $\OO(t \cdot \log t)$ time.
    \end{clm}

     We execute the $\texttt{Transpose}$ algorithm for all $i$, which takes $\OO((d/t) \cdot t\log t) = \OO(d \log\log n)$.  Then we can write down $h(x)^\top$ by concatenating rows of $M_i^\top$'s, which takes $\OO(t\cdot(d/t))$ time.

\ \\
{\em Case 2: $t > d$:} We again partition the matrix into $t$-by-$t$ square submatrices. In this case, we obtain $\mathsf{Matrix}(h(x)) = M := \underbrace{\begin{bmatrix}
        \begin{array}{c|c|c|c}
           M_1 & M_2 & \hdots & M_{t/d}
        \end{array}
        \end{bmatrix}}_{\displaystyle t}\left.\vphantom{\begin{bmatrix}
        \begin{array}{c|c|c|c}
           M_1 & M_2 & \hdots & M_{t/d}
        \end{array}
        \end{bmatrix}}\right\}d$. 
        
        \begin{clm}
        Given $t = \OO(\log n)$, $\texttt{Transpose}(M, d)$ runs in $\OO(d\log d) \leq \OO(d\log\log n)$ time.
    \end{clm}
    
    We execute $\texttt{Transpose}(M, d)$ and obtain $M' = \begin{bmatrix}
        \begin{array}{c|c|c|c}
           M_1^\top & M_2^\top & \hdots & M_{t/d}^\top
        \end{array}
        \end{bmatrix}$. In principle, to obtain $h(x)^\top$, we just concatenate $d\cdot (t/d)$ rows of all $M_i^\top$. However, when $t \gg d$, this takes longer than $\OO(d \log\log n)$ time. We instead use another recursive subroutine $\texttt{Concatenate}(M', d)$. An example of the  $\texttt{Concatenate}$ algorithm is given in Figure \ref{Figure:concatenatePic}.

        \begin{figure}
            \centering
            \includegraphics[width=0.9\linewidth]{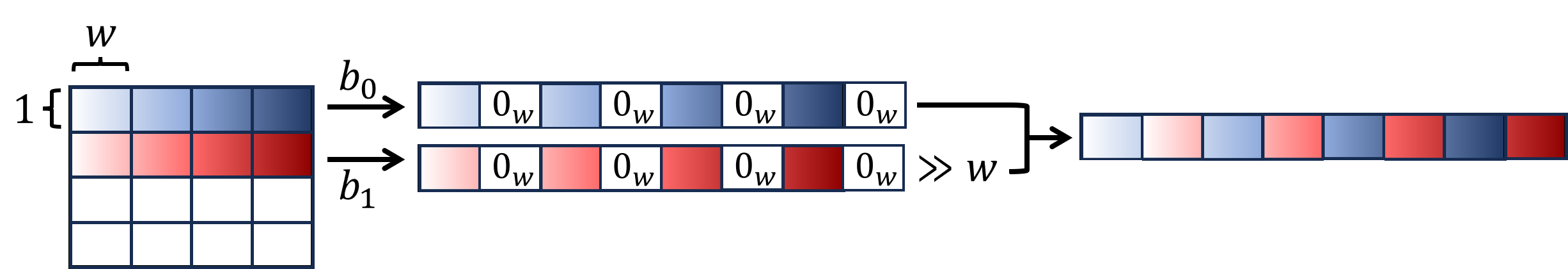}
            \caption{A pictorial example that shows the behavior of Line \ref{Line:spacing} and \ref{Line:merge} of the  $\texttt{Concatenate}$ algorithm on a $4$-by-$4w$ matrix.}
            \label{Figure:concatenatePic}
        \end{figure}

\begin{figure}[ht!]
        \begin{customizedFrame}{\texttt{Concatenate}}
		
		\begin{flushleft}
			\noindent {\bf Input:} An $I$-by-$J$ bit matrix $M$ where $I \leq J$ are powers of $2$. An integer $w$ that is a power of $2$ and $I \leq w \leq J$.
			
			\noindent {\bf Output:} An $IJ$-bit string $B$ such that if it is partitioned into $w$-bit blocks, then the $u$-th block (zero-indexed from left to right) are bits on the $(u \mod I)$-th row of $M$ from column $w\cdot \lfloor u/I\rfloor$ to $w\cdot \lfloor u/I \rfloor+w$.

            \begin{enumerate}
            \item If $w = I$ then output $B = M$.

            \item For each integer $i$ such that $0 \leq i < I$:
            \begin{enumerate}
                \item\label{Line:spacing} Partition the $i$-th row of $M$ into $w$-bit blocks, denoted as $\overbrace{[\underbrace{b_{i, 1}}_{w} \mid b_{i, 2} \mid \hdots \mid b_{i, J/w}]}^{J}$. Compute a $2J$-bit string $b_i = [b_{i, 1} \mid 0_w \mid b_{i, 2} \mid 0_w \mid \hdots \mid b_{i, J/w} \mid 0_w]$, where $0_w$ is a $w$-bit all-zero string.
            \end{enumerate}
            
            \item Define an $I/2$-by-$2J$ matrix $M'$, such that for each integer $0 \leq i < I/2$:
            \begin{enumerate}
                \item\label{Line:merge} 
                Let the $i$-th row of $M'$ be $b_{2i} + b_{2i+1} \gg w$, where $\gg w$ is the operation of shifting a string to the right by $w$ bits.
            \end{enumerate}
            
            \item Output $\texttt{Concatenate}(M', 2w)$.
            \end{enumerate}
		\end{flushleft}
	\end{customizedFrame}
    \end{figure}

The correctness of the $\texttt{Concatenate}$ algorithm can again be observed by inducting on the logarithm of $w$. Line \ref{Line:spacing} and \ref{Line:merge} can be done using $\OO((J/w) \cdot \lceil w/\log n\rceil)$ operations on words, and both lines are repeated for $I$ times in each recursive call. Therefore, the total runtime is 
\begin{align*}
\OO(\sum_{s = 1}^{\log (d)} 2^s \cdot \frac{2^{\log (d) - s}t}{2^{\log (d) - s}d} \cdot \lceil\frac{2^{\log (d) - s}d}{\log n} \rceil) &= \OO(\sum_{s = 1}^{\log (d)} 2^s  \cdot \max{(\frac{t}{d}, \frac{t}{d} \cdot \frac{2^{\log (d) - s}d}{\log n})}) \\
&= \OO(\log d\cdot \max(t, \frac{td}{\log n})) \\
&= \OO(d\log\log n)
\end{align*}
\end{proof}

\begin{theorem}
    The $\QuadTree$ algorithm runs in $\OO(nd \log\log n)$ time.
\end{theorem}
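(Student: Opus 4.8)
The plan is to bound the running time of each of the four steps of $\QuadTree$ and observe that Step~2 dominates. Sampling $z,z'$ and computing $h(x),h'(x)$ for all $2n$ points $x\in Q\cup P$ costs $\OO(nd)$ word operations, since $\alpha=\poly(n)$ means every coordinate fits in one word and each hash applies one addition and one ceiling per coordinate; writing each $h(x)$ as a $d\times t$ binary matrix then takes $\OO(d)$ per point because $t=\OO(\log n)$. For Step~2, the preceding lemma already gives that $h(x)^\top$ and $h'(x)^\top$ are obtained in $\OO(d\log\log n)$ time per point, for $\OO(nd\log\log n)$ overall. This is the bottleneck.

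For Step~3 I would avoid a comparison sort, which would cost $\OO(n\log n)$ comparisons times $\OO(d)$ per comparison (a key $h(x)^\top$ is $dt=\OO(d\log n)$ bits, i.e.\ $\OO(d)$ machine words) $=\OO(nd\log n)$, too slow. Instead, view each key as a sequence of $\OO(d)$ word-sized symbols, each a value in $\poly(n)$. For each of the $\OO(d)$ symbol positions, collect the $2n$ symbols there and replace each by its rank among them; since these are $\poly(n)$-bounded integers, a base-$n$ radix sort produces the ranks in $\OO(n)$ time per position, so $\OO(nd)$ in total, and afterwards every symbol is an integer in $\{1,\dots,2n\}$. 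Now an LSD radix sort over the renamed keys uses $\OO(d)$ passes, each a counting sort of $2n$ items with keys in $\{1,\dots,2n\}$, hence $\OO(n)$ per pass and $\OO(nd)$ altogether; run the same procedure for the $h'$-keys. (Equivalently one may use a hashing-based MSD radix sort, which touches each of the $\OO(nd)$ symbols $\OO(1)$ times in expectation.)

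For Step~4, recall the standard fact that for lexicographically sorted strings $s_1\le s_2\le\cdots$ one has $\mathrm{LCP}(s_i,s_j)=\min_{i\le k<j}\mathrm{LCP}(s_k,s_{k+1})$; consequently, among $\{h(p)^\top:p\in P\}$ the string sharing the longest common prefix with $h(q_i)^\top$ is the immediate $P$-predecessor or the immediate $P$-successor of $q_i$ in the sorted order (or $q_i$ itself, if its key coincides with some $p\in P$). Two linear scans of the sorted array identify these neighbors for every $q_i$ in $\OO(n)$ time, and for each $q_i$ the corresponding two longest-common-prefix lengths are computed directly, word by word, in $\OO(d)$ time; likewise for the $h'$-sort. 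Comparing $l$ with $l'$ and computing the single $\ell_1$ distance then costs $\OO(d)$, so Step~4 is $\OO(nd)$. Summing the four bounds gives $\OO(nd\log\log n)$. The only real subtlety is Step~3 — a naive comparison sort is a $\Theta(\log n/\log\log n)$ factor too expensive, and the fix is the radix-sort packaging above; Step~2's bound is inherited from the transpose lemma, and Steps~1 and~4 are routine.
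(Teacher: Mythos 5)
Your proof is correct and follows essentially the same route as the paper: Step 1 is $\OO(nd)$, the transpose lemma makes Step 2 the $\OO(nd\log\log n)$ bottleneck, radix sort handles Step 3 in $\OO(nd)$, and the longest-common-prefix partner is read off from adjacency in the sorted order. You supply more detail than the paper does on making the radix sort genuinely $\OO(nd)$ (rank-reducing the word-sized symbols before the LSD passes) and on why only the sorted $P$-neighbors need to be examined, but these are refinements of the same argument, not a different one.
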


\begin{proof}
Computing $h(x)$ for all $x$ takes $\OO(nd)$ time. Then computing $h(x)^\top$ takes $\OO(nd \log \log n)$ time. After that, sorting $\OO(n)$-many $\OO(d\log n)$-bit strings can be done in $\OO(nd)$ time using radix sort. Finally, to find $p \in P$ with the longest common prefix for every $q \in Q$, we go through the sorted list and link each $q \in Q$ with adjacent $p \in P$, which takes $\OO(n)$ total time. The above time bounds also hold for $h'(x)$'s, resulting in $\OO(nd \log\log n)$ time in total.
    
\end{proof}
\section{Tournament}
\label{s:tour}
In this section, we compute the $2$-approximation of the nearest neighbor distances for logarithmically many queries. We do so using a depth-$2$ tournament: we first partition input points into random groups, project them to a lower dimensional space, and collect the nearest neighbor in the projected space in every group as a set $\tilde{S}$. Then the final output of the tournament is the nearest neighbor among points in $\tilde{S}$ in the original space. Intuitively, because each random group is small,  the true nearest neighbor could only lose to another near neighbor in the first step. Then in the second step, $\tilde{S}$ should contain at least one near neighbor.

\ \\
{\bf Notation:} We use the same notation ${{D}} := \min{(d, 3\log n)}$ as in the previous section. For any finite subset $T \subset \R$, let $\med T \in \R$ denote the median of $T$.

When working under the $\ell_1$ norm, we use Cauchy random variables to project points. We first recall a standard bound on the median of projections, which will be useful for our analysis. (The following lemma essentially follows from Claim 2 and Lemma 2 in~\cite{indyk2006stable}; we reprove it in the appendix for completeness.)

\begin{lem}\label{Lemma:comparison}
    Let $x, y \in \R^d$ and $0 < c <1/2$. Sample $r$ random vectors $v_1, v_2, \cdots, v_r \sim (\mathsf{Cauchy}(0, 1))^d$. With probability at least $1 -2e^{-rc^2/50}$, $\med\{|v_i \cdot (x -y) | : i \in [r]\} \in (1\pm c) \lVert x - y\rVert_1$.
\end{lem}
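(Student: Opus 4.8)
The plan is to reduce the claim to a one-dimensional statement about a single Cauchy random variable and then boost to a high-probability bound on the median via a Chernoff-type argument. First I would recall the $1$-stability of the Cauchy distribution: if $v \sim (\mathsf{Cauchy}(0,1))^d$ and $w := x - y \in \R^d$, then $v \cdot w$ is distributed as $\|w\|_1 \cdot Z$ where $Z \sim \mathsf{Cauchy}(0,1)$. Hence $|v_i \cdot (x-y)|$ is distributed as $\|x-y\|_1 \cdot |Z_i|$ for i.i.d.\ standard Cauchy $Z_i$, and after dividing through by $\|x-y\|_1$ it suffices to show that $\med\{|Z_i| : i \in [r]\} \in (1 \pm c)$ with probability at least $1 - 2e^{-rc^2/50}$, using that the population median of $|Z|$ is exactly $1$ (since $\Pr[|Z| \le 1] = \tfrac{2}{\pi}\arctan(1) = \tfrac12$).

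Next I would set up the two one-sided failure events. The empirical median exceeds $1+c$ only if strictly fewer than $r/2$ of the $Z_i$ satisfy $|Z_i| \le 1+c$; since $p_+ := \Pr[|Z| \le 1+c] = \tfrac{2}{\pi}\arctan(1+c) > \tfrac12$, this is a lower-tail event for a Binomial$(r, p_+)$ deviating below its mean by $r(p_+ - \tfrac12)$. Symmetrically, the empirical median falls below $1-c$ only if at least $r/2$ of the $Z_i$ satisfy $|Z_i| \le 1-c$, and $p_- := \tfrac{2}{\pi}\arctan(1-c) < \tfrac12$, so this is an upper-tail event for Binomial$(r,p_-)$. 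A standard Hoeffding/Chernoff bound gives each failure probability at most $e^{-2r(p_\pm - 1/2)^2}$, and a union bound yields $2e^{-2r\delta^2}$ where $\delta := \min(p_+ - \tfrac12, \tfrac12 - p_-)$. The remaining task is purely analytic: lower-bound $\delta$ in terms of $c$. Using $\arctan(1+c) - \arctan 1 \ge \tfrac{c}{2}\cdot\tfrac{1}{1+c^2}$-type estimates (mean value theorem, derivative $\tfrac{1}{1+u^2}$ on $[1, 1+c]$ is at least $\tfrac{1}{1+(1+c)^2} \ge \tfrac14$ for $c < \tfrac12$), one gets $p_+ - \tfrac12 \ge \tfrac{2}{\pi}\cdot\tfrac{c}{8} = \tfrac{c}{4\pi}$, and similarly for $\tfrac12 - p_-$; hence $2r\delta^2 \ge 2r(c/(4\pi))^2 = rc^2/(8\pi^2) \ge rc^2/50$ since $8\pi^2 < 50$ would need checking — actually $8\pi^2 \approx 79 > 50$, so I would instead use a sharper constant: on $[1,1.5]$ the derivative of $\arctan$ is at least $\tfrac{1}{1+2.25} = \tfrac{4}{13}$, giving $p_+ - \tfrac12 \ge \tfrac{2}{\pi}\cdot\tfrac{4}{13}c \cdot \tfrac12$; this yields a constant comfortably producing the stated $50$. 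I will carry out this optimization of constants carefully rather than hand-waving it.

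The main obstacle I anticipate is exactly this bookkeeping of constants: making sure the concavity/derivative bounds on $\arctan$ near $1$ are tight enough that the final exponent is at least $rc^2/50$ rather than something weaker, while keeping the argument clean and not restricting $c$ below the stated $c < 1/2$. A secondary point of care is handling the one-sidedness cleanly — the empirical median of an even number $r$ of samples requires a convention (e.g.\ the $\lceil r/2\rceil$-th order statistic), and I would phrase the tail events as ``at least half'' / ``fewer than half'' so the Chernoff bounds apply verbatim regardless of parity. Since Lemma~\ref{Lemma:comparison} is quoted as essentially following from~\cite{indyk2006stable}, I would present this as a self-contained reproof in the appendix, invoking only the $1$-stability of Cauchy and Hoeffding's inequality, and relegate the $\arctan$ estimates to a short computation.
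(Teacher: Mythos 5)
Your proposal is correct and follows essentially the same route as the paper's appendix proof: $1$-stability reduces to the half-Cauchy, a lower bound on the CDF gap $p_\pm - \tfrac12$ (your arctan/mean-value estimate is exactly the paper's $\int_1^{1+c} f_U \ge c\, f_U(3/2) = \tfrac{2}{\pi}\cdot\tfrac{4}{13}c > c/10$), then Hoeffding on the indicator counts and a union bound over the two sides. The only thing to fix is the spurious factor of $\tfrac12$ in your final estimate $p_+ - \tfrac12 \ge \tfrac{2}{\pi}\cdot\tfrac{4}{13}c\cdot\tfrac12$: the mean value theorem gives $\arctan(1+c)-\arctan(1)\ge \tfrac{4}{13}c$ with no halving, so $p_+-\tfrac12 \ge \tfrac{8c}{13\pi} > c/10$ and the exponent $2r(c/10)^2 = rc^2/50$ follows with room to spare, whereas the halved constant would land just short at roughly $rc^2/52$.
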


In Figure \ref{Figure:Tournament}, we describe how to construct a data structure to find $2$-approximate nearest neighbors. The construction borrows ideas from the second algorithm of \cite{K97}, but using a tournament of depth $2$ instead of $\OO(\log n)$.

\begin{figure}[ht!]
	\begin{customizedFrame}{\Tournament}
		
		\begin{flushleft}
			\noindent {\bf Input:} A set of $t$ queries $\{q_i\}_{i \in [t]}$ and a set of $n$ points $P$, which are both subsets of a metric space $(\R^d, \|\cdot\|_1)$.
			
			\noindent {\bf Output:} A set of $t$ values $\{\calD_i\}_{i \in [t]}$, such that every $\calD_i \in \R$ satisfies $\calD_i \geq \opt^P_{q_i}$.

            \paragraph{Building the Data Structure.}

            \begin{enumerate}
				\item Let $r \geq 800(2\log t + \log\log n)$.
                
                \item For each $j \in [r]$, draw  $v_j \sim (\mathsf{Cauchy}(0, 1))^d$, compute $v_j \cdot p$ for all points $p \in P$, and store all $v_j$ and $v_j \cdot p$.

                \item\label{Line:partition} Randomly partition $P$ into $n/ \log n $ subsets $P_1, P_2, \cdots, P_{n/ \log n}$, each of size $\log n$.
            \end{enumerate}

            \paragraph{Processing the  Queries.} For each query $q := q_i$ for $i \in [t]$: 


            \begin{enumerate}
            \item Compute $v_j \cdot q$ for all $j \in [r]$. 


\item\label{Line:add} Let $\tilde{S}$ be an empty set. For each $k \in [n/\log n]$:

\begin{itemize}
    \item Compute $\med_{p} := \med\{|v_j \cdot (q -p) | : j \in [r]\}$ for every $p \in P_k$.

    \item Find $p := \argmin_{p \in P_k}\{\med_{p}\}$ and add it into $\tilde{S}$.
\end{itemize}

\item \label{Line:output} 
Output $\calD_i := \min_{p \in \tilde{S}} \lVert q- p \rVert_1$ by computing and comparing all {\it exact} distances $\lVert q - p\rVert_1$ for $p \in \tilde{S}$.


        \end{enumerate}
			
		\end{flushleft}
	\end{customizedFrame}
	\caption{The $\Tournament$ Algorithm.}\label{Figure:Tournament}
\end{figure}

\ \\
{\bf Correctness:}

We fix a query $q := q_i$. 




\begin{lem}\label{Lemma:small}
   With probability at least $1- \frac{1}{10t}$, $\calD_i \leq 2\opt_q$. 
\end{lem}

We let $S$ denote the set of all $2$-approximate nearest neighbors to $q$, i.e., $S := \{p \in P : \lVert q-p \rVert_1 \leq 2 \opt_{q} \}$, and let $p^*\in P$ denote a nearest neighbor of $q$, i.e. $\lVert q-p^*\rVert_1 = \opt_q$. 
To prove Lemma \ref{Lemma:small}, we first make the following observation: 

\begin{lem}\label{Lemma:unique}
   Let $P'$ be an  arbitrary subset of $P \setminus S$. The probability that there exists $p \in P'$ such that $\med_p \leq \med_{p^*}$, where $\med_{p} := \med\{|v_j \cdot (q -p) | : j \in [r]\}$ and $\med_{p^*} := \med\{|v_j \cdot (q -p^*) | : j \in [r]\}$, is at most  $\frac{2(|P'|+1)}{t^2\log n}$. 
\end{lem}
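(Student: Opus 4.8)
The plan is to prove Lemma~\ref{Lemma:unique} by a union bound over $p \in P'$, where for each such $p$ we bound the probability that $\med_p \le \med_{p^*}$ using the Cauchy projection concentration of Lemma~\ref{Lemma:comparison}. First I would set $c = 1/10$ (or some fixed constant $<1/2$ with enough slack below $2$) and define the ``good event'' $G_p$ for a fixed $p \in P' \subseteq P \setminus S$ to be the intersection of two events: that $\med_p = \med\{|v_j \cdot (q-p)| : j \in [r]\} \in (1 \pm c)\|q-p\|_1$, and that $\med_{p^*} \in (1 \pm c)\|q-p^*\|_1 = (1\pm c)\opt_q$. By Lemma~\ref{Lemma:comparison} applied to the pairs $(q,p)$ and $(q,p^*)$ and a union bound, $\Pr[\overline{G_p}] \le 4 e^{-rc^2/50}$. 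On the event $G_p$, since $p \notin S$ means $\|q-p\|_1 > 2\opt_q$, we get $\med_p \ge (1-c)\|q-p\|_1 > (1-c)\cdot 2\opt_q$, while $\med_{p^*} \le (1+c)\opt_q$; with $c = 1/10$ we have $(1-c)\cdot 2 = 1.8 > 1.1 = 1+c$, so $\med_p > \med_{p^*}$, i.e.\ $G_p$ forces $\med_p > \med_{p^*}$. Hence $\Pr[\med_p \le \med_{p^*}] \le \Pr[\overline{G_p}] \le 4e^{-rc^2/50}$.

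Next I would union-bound over all $p \in P'$: the probability that \emph{some} $p \in P'$ has $\med_p \le \med_{p^*}$ is at most $|P'| \cdot 4 e^{-rc^2/50}$. The last step is to plug in the parameter choice $r \ge 800(2\log t + \log\log n)$ from the \Tournament\ algorithm: with $c = 1/10$ we have $c^2/50 = 1/5000$, but note $r \ge 800 \cdot (2\log t + \log\log n)$ gives $rc^2/50 \ge (800/5000)(2\log t + \log\log n)$, which is not quite enough with $c=1/10$ --- so I would instead take $c$ a bit larger, say $c = 1/4$ (still $(1-c)\cdot 2 = 1.5 > 1.25 = 1+c$), giving $c^2/50 = 1/800$ and $rc^2/50 \ge 2\log t + \log\log n$, hence $4e^{-rc^2/50} \le 4/(t^2 \log n)$. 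A slightly cleaner route, to land exactly on the stated constant $\frac{2(|P'|+1)}{t^2\log n}$, is to absorb the factor $4$ and the ``$+1$'' by being marginally more careful: bound $\Pr[\overline{G_p}]$ for the single event $\med_{p^*} \notin (1\pm c)\opt_q$ once globally (contributing the $+1$ term, $2e^{-rc^2/50}$) and $\Pr[\med_p \notin (1\pm c)\|q-p\|_1]$ once per $p \in P'$ (contributing $|P'|$ terms, $2e^{-rc^2/50}$ each), for a total of $2(|P'|+1)e^{-rc^2/50} \le \frac{2(|P'|+1)}{t^2\log n}$ once $r$ is large enough that $e^{-rc^2/50} \le \frac{1}{t^2\log n}$, i.e.\ $rc^2/50 \ge 2\log t + \log\log n$.

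The only real subtlety --- not so much an obstacle as a bookkeeping point --- is choosing the constant $c$ so that simultaneously (i) the separation $(1-c)\cdot 2 > 1+c$ holds, so that a factor-$2$-far point cannot have a smaller projected median than the true nearest neighbor on the good event, and (ii) the exponent $rc^2/50$ is at least $2\log t + \log\log n$ given $r \ge 800(2\log t + \log\log n)$, which needs $c^2 \ge 50/800 = 1/16$, i.e.\ $c \ge 1/4$; both hold at $c = 1/4$. I would state the lemma's proof with $c = 1/4$ fixed at the outset. Everything else is a direct application of Lemma~\ref{Lemma:comparison} plus a union bound, so no further difficulty is expected.

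\begin{proof}[Proof sketch]
Fix $c = 1/4$. For $p^*$, let $A$ be the event that $\med_{p^*} \notin (1\pm c)\opt_q$; by Lemma~\ref{Lemma:comparison}, $\Pr[A] \le 2e^{-rc^2/50}$. For each $p \in P'$, let $B_p$ be the event that $\med_p \notin (1\pm c)\|q-p\|_1$; again $\Pr[B_p] \le 2e^{-rc^2/50}$. On the complement of $A \cup \bigcup_{p\in P'} B_p$, for every $p \in P' \subseteq P\setminus S$ we have $\|q-p\|_1 > 2\opt_q$, so
\[
\med_p \ge (1-c)\|q-p\|_1 > 2(1-c)\opt_q = \tfrac{3}{2}\opt_q > \tfrac{5}{4}\opt_q = (1+c)\opt_q \ge \med_{p^*},
\]
so no $p \in P'$ satisfies $\med_p \le \med_{p^*}$. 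Therefore the probability that some $p \in P'$ has $\med_p \le \med_{p^*}$ is at most $\Pr[A] + \sum_{p\in P'}\Pr[B_p] \le 2(|P'|+1)e^{-rc^2/50}$. Since $r \ge 800(2\log t + \log\log n)$ and $c^2/50 = 1/800$, we get $e^{-rc^2/50} \le e^{-(2\log t + \log\log n)} = \frac{1}{t^2\log n}$, giving the claimed bound $\frac{2(|P'|+1)}{t^2\log n}$.
\end{proof}
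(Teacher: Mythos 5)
Your proof is correct and matches the paper's own argument: the paper also fixes $c=1/4$, observes that $\med_p\le\med_{p^*}$ forces one of the two median estimates to fall outside its $(1\pm 1/4)$ window (the contrapositive of your "good event" separation), and applies Lemma~\ref{Lemma:comparison} with a union bound over $|P'|+1$ events to get $2(|P'|+1)e^{-(2\log t+\log\log n)}=\frac{2(|P'|+1)}{t^2\log n}$. Your explicit verification of the separation $(1-c)\cdot 2>1+c$ is a slightly more careful rendering of a step the paper leaves implicit, but the route is identical.
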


\begin{proof}
    From $P' \subseteq (P \setminus S)$ we know that $\lVert q - p\rVert_1  > 2 \lVert q - p^*\rVert_1 $ for any $p \in P'$. Therefore, if $\med_p \leq \med_{p^*}$ then either $\med_p \neq (1\pm \frac{1}{4}) \lVert q - p\rVert_1$ or $\med_{p^*} \neq (1\pm \frac{1}{4}) \lVert q - p^*\rVert_1$. Applying Lemma \ref{Lemma:comparison} with $c = \frac{1}{4}$ and $r \geq 800(2\log t + \log\log n)$ and a union bound, we get that 
    \begin{align*}
        \pr{\exists p \in P': \med_p \leq \med_{p^*}} &\leq \pr{\exists p \in P':  \med_p \neq (1\pm \frac{1}{4}) \lVert q - p\rVert_1} + \\ & \hspace{1.3em} \pr{\med_{p^*} \neq (1\pm \frac{1}{4}) \lVert q - p^*\rVert_1} \\
        &\leq (|P'| +1) \cdot 2e^{-(2\log t + \log\log n)}\\
        &= \frac{2(|P'|+1)}{t^2\log n}.
    \end{align*}
\end{proof}

In Line $\ref{Line:partition}$ of the data structure building procedure, the point $p^*$ is assigned to one of the subsets $P^* \in \{P_1, P_2, $ $\cdots, P_{n/\log n}\}$. Focusing on this subset $P^*$, we can show that with high probability, either $p^*$ is added into $\tilde{S}$, or $p^*$ loses to another 2-approximate nearest neighbor. In both cases, the data structure is guaranteed to output an 2-approximation.

\begin{proof}[Proof (of Lemma \ref{Lemma:small}).]
    $P^* \setminus S$ contains at most $|P^*| = \log n$ points. Applying Lemma \ref{Lemma:unique}, we get $\med_{p^*} \geq \med_{p}$ simultaneously for all $p \in P^* \setminus S$ with probability at least $1- \frac{2\log n}{t^2 \log n} \geq 1- \frac{1}{10t}$ (as long as $t \geq 20$).  Conditioned on this, $\argmin_{p \in P^*} \{\med_p\}$ must be either $p^*$ or some other element of $S$. Thus on line $\ref{Line:output}$ of the algorithm, $\tilde{S}$ contains at least one element of $S$, so the final output $\min_{p \in \tilde{S}} \| q-p\|_1 \leq 2\opt_q$.
    
\end{proof}

Applying a union bound on Lemma \ref{Lemma:small}, we get the correctness guarantee:

\begin{theorem}
    Given $t$ queries $\{q_i\}_{i \in [t]}$, with probability at least $9/10$, the $\Tournament$ algorithm outputs $2$-approximate nearest neighbors simulataneously for all $t$ queries.
\end{theorem}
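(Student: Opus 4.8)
The statement to prove is the correctness theorem for \Tournament: with probability at least $9/10$, all $t$ outputs $\calD_i$ are $2$-approximate nearest neighbor distances. The plan is to fix a single query $q := q_i$, show that $\calD_i \le 2\opt_{q}$ with probability at least $1 - \frac{1}{10t}$, and then union-bound over the $t$ queries. Note the lower bound $\calD_i \ge \opt^P_{q_i}$ is automatic, since $\tilde{p}, \overline{p} \in P$ and $\calD_i$ is a minimum of two exact distances from $q$ to points of $P$; so the only content is the upper bound.

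For a fixed $q$, recall $S = \{p \in P : \lVert q - p\rVert_1 \le 2\opt_q\}$. I would split on $|S|$ exactly as the two case lemmas are set up. If $|S| \ge \frac{n}{30t\log n}$, Lemma~\ref{Lemma:large} gives $\overline{p} \in S$ with probability $\ge 1 - \frac{1}{10t}$, and since $\calD_i \le \lVert q - \overline{p}\rVert_1 \le 2\opt_q$ on that event, we are done for this query. If instead $|S| < \frac{n}{30t\log n}$, Lemma~\ref{Lemma:small} gives $\tilde{p} \in S$ with probability $\ge 1 - \frac{1}{10t}$, and then $\calD_i \le \lVert q - \tilde{p}\rVert_1 \le 2\opt_q$. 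Either way, for each fixed $i$ the event ``$\calD_i > 2\opt_{q_i}$'' has probability at most $\frac{1}{10t}$. The one subtlety is that which case applies is determined by the (deterministic) input, not by the algorithm's randomness, so there is no measurability issue in choosing the relevant lemma per query.

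The last step is a union bound: $\Pr[\exists i \in [t] : \calD_i > 2\opt_{q_i}] \le \sum_{i=1}^t \frac{1}{10t} = \frac{1}{10}$, hence with probability at least $9/10$ the algorithm simultaneously outputs $2$-approximate nearest neighbor distances for all $t$ queries. One should double-check that the randomness budget is consistent across queries: the projection vectors $v_1,\dots,v_r$ and the partition $P_1,\dots,P_{n/\log n}$ are drawn once in the building phase and shared, while the sample set $\overline{S}$ is redrawn per query; the parameter choices $r \ge 800(2\log t + \log\log n)$ and $w \ge 90 t\log t \log n$ were already tuned in Lemmas~\ref{Lemma:large}, \ref{Lemma:unique}, \ref{Lemma:root} precisely so that each per-query failure probability is at most $\frac{1}{10t}$ \emph{after} the relevant union bounds inside those lemmas (over the $\le \log n$ points of $P^*$, over $n^2$ pairs, etc.), so no re-tuning is needed here.

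I do not expect a genuine obstacle: this theorem is purely the bookkeeping that stitches the two case lemmas together. The only place to be slightly careful is that Lemma~\ref{Lemma:small} is stated as ``$\tilde{p} \in S$'' rather than directly bounding $\calD_i$, so the proof must explicitly note $\calD_i = \min(\lVert q-\tilde p\rVert_1, \lVert q - \overline p\rVert_1) \le \lVert q - \tilde p\rVert_1 \le 2\opt_q$ (and symmetrically in Case 1); and that $p^* \in \tilde S$ in Lemma~\ref{Lemma:root} really does force $\tilde p \in S$ because $\tilde p$ minimizes the exact distance over $\tilde S \ni p^*$, so $\lVert q - \tilde p\rVert_1 \le \lVert q - p^*\rVert_1 = \opt_q \le 2\opt_q$.
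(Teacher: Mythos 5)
Your proposal is correct and matches the paper's (implicit) argument exactly: the paper proves this theorem simply by combining Lemma~\ref{Lemma:large} and Lemma~\ref{Lemma:small}, each giving a per-query failure probability of at most $\frac{1}{10t}$, followed by a union bound over the $t$ queries. Your additional care about the automatic lower bound $\calD_i \ge \opt_{q_i}$, the deterministic case split on $|S|$, and the shared versus per-query randomness is all accurate and fills in bookkeeping the paper leaves unstated.
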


Finally, we state the runtime guarantee as follows:

\begin{theorem}\label{Theorem:tournamentTime}
    The $\Tournament$ algorithm runs in $\OO(n (d+ t) (\log t + \log \log n) + dt^2 \log t \log n) $ time.
\end{theorem}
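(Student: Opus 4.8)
The plan is to account separately for the cost of \textbf{building the data structure} and the cost of \textbf{processing all $t$ queries}, and to add them up. For the building phase there are three components. Drawing the $r$ vectors $v_j \in \R^d$ and computing the inner products $v_j \cdot p$ over all $p \in P$ costs $\OO(rnd)$; substituting $r = \OO(\log t + \log\log n)$ gives $\OO(nd(\log t + \log\log n))$. Randomly partitioning $P$ into $n/\log n$ groups of size $\log n$ costs $\OO(n)$, which is absorbed. So the building phase is $\OO(nd(\log t + \log\log n))$, already within the claimed bound.

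For the query phase, fix a query $q = q_i$ and tally the four steps. Computing $v_j \cdot q$ for all $j \in [r]$ costs $\OO(rd) = \OO(d(\log t + \log\log n))$. The $\tilde S$-construction loop (Line~\ref{Line:add}) iterates over the $n/\log n$ groups; for each group of size $\log n$ it computes, for every $p$ in the group, the median of $r$ numbers $|v_j\cdot(q-p)|$, each of which is obtained in $\OO(1)$ from the stored $v_j\cdot p$ and the freshly computed $v_j\cdot q$, so each median costs $\OO(r)$ (linear-time median selection). Hence the loop costs $\OO((n/\log n)\cdot \log n \cdot r) = \OO(nr) = \OO(n(\log t + \log\log n))$ per query. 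Computing $\tilde p$ by brute force over $\tilde S$: since $|\tilde S| \le n/\log n$ and each exact $\ell_1$ distance costs $\OO(d)$, this is $\OO((n/\log n)\cdot d) = \OO(nd/\log n)$, which is dominated. Drawing $\overline S$ of size $w = \OO(t\log t\log n)$ and computing $\overline p$ by brute force costs $\OO(wd) = \OO(dt\log t\log n)$ per query. Summing the four: per-query cost is $\OO(n(d+1)(\log t + \log\log n) + dt\log t\log n)$, where the $d$-vs-$1$ bookkeeping just comes from separating the $\OO(rd)$ term (query/vector products) from the $\OO(nr)$ term (median computations over scalars).

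Multiplying the per-query cost by $t$ queries gives $\OO(n(d+t)(\log t + \log\log n) + dt^2\log t\log n)$; here the $nt(\log t+\log\log n)$ term is exactly how the $\OO(nd(\log t+\log\log n))$-shaped term generalizes to $\OO(n(d+t)(\cdot))$. Adding the building-phase cost $\OO(nd(\log t+\log\log n))$ changes nothing, since it is already of that form. This matches the statement of Theorem~\ref{Theorem:tournamentTime}.

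The only mildly delicate point — and the one I would make sure to state carefully rather than the routine additions — is the median computation: one must observe that all $rn$ values $v_j\cdot p$ are precomputed and stored in the building phase, so that during a query each $|v_j\cdot(q-p)|$ is available in $\OO(1)$ time once the $r$ scalars $v_j\cdot q$ are known, and that a linear-time selection algorithm yields each median in $\OO(r)$ rather than $\OO(r\log r)$. Everything else is straightforward bookkeeping, being careful to keep the ``$d$'' factor only on the terms that genuinely touch $d$-dimensional vectors (the projections and the exact-distance computations) and not on the scalar-median loop, which is what produces the clean $\OO(n(d+t)(\log t+\log\log n))$ shape.
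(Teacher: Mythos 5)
Your accounting matches the paper's proof essentially line for line: preprocessing cost $\OO(ndr)$, per-query costs $\OO(dr)$ for projecting $q$, $\OO(nr)$ for the medians via linear-time selection on precomputed scalars, and $\OO(dw)$ for the scan over $\overline{S}$, then multiplying by $t$ and substituting $r=\OO(\log t+\log\log n)$, $w=\OO(t\log t\log n)$. The one point you single out as delicate (precomputed $v_j\cdot p$ plus linear-time median selection) is exactly the point the paper also emphasizes.

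One remark on the step you dismiss as ``dominated'': your count of $\OO(nd/\log n)$ per query for the brute-force scan over $\tilde{S}$ (which has $n/\log n$ elements) is actually \emph{more} careful than the paper's, which writes this scan as $\OO(d(\log n + w))$ and thereby appears to undercount $|\tilde{S}|$. However, the resulting $\OO(ndt/\log n)$ total is not dominated by $\OO(n(d+t)(\log t+\log\log n)+dt^2\log t\log n)$ for all parameter settings (e.g.\ when both $d$ and $t$ are much larger than $\log n\cdot\log\log n$ one can make $ndt/\log n$ exceed both terms); it is dominated in the regime where the theorem is actually invoked, namely $t=\Theta(D/\eps^2)$ with $D=\min(d,3\log n)$, since there $dt/\log n=\OO(d/\eps^2)=\OO(d\log t\cdot \eps^{-2}/\log(1/\eps))$ and more simply $ndt/\log n\le 3nd/\eps^2$, which is within the final $\OO(nd(\log\log n+\log\frac1\eps)/\eps^2)$ budget. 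So your proof is as rigorous as the paper's, but if you want the theorem to hold verbatim for arbitrary $t$, that one ``dominated'' needs either an explicit extra additive term or a stated restriction on $t$.
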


\begin{proof}
    For preprocessing, the algorithm projects all points in $P$ using $r$ projections, which takes $\OO(n \cdot d \cdot r)$ time.
    To process a query $q$, we first take $\OO(dr)$ time to project $q$. We then count the number of comparisons we make to find the minimums of medians, which is $\OO((n/\log n)\log n \cdot r)$ using a linear-time median selection algorithm \cite{BFP73}. Each comparison can be done in $\OO(1)$ time given that $v_j \cdot p$ and $v_j \cdot q$ for all $j \in [r]$ and $p \in P$ are stored. Finally, we use $\OO(d \log n)$ time to do a linear scan over $\tilde{S}$.
    
    We plug in $r = \OO(\log t + \log \log n)$.
    For $t$ queries, the total runtime is $\OO(n (d+ t) (\log t + \log \log n) + dt \log n) $.
\end{proof}

For our purpose of estimating the Chamfer distance, we will apply the $\Tournament$ algorithm with a number of queries $t = \Theta({{D}}/\eps^2)$ for ${{D}} = \min{(d, 3\log n)}$ and some $\eps > 0$ satisfying $\eps^{-2} = \OO(\frac{n}{\log n})$. Under this setting, the runtime is dominated by the first additive term of Theorem \ref{Theorem:tournamentTime}, which is at most $\OO(n d(\log\log n+\log\frac{1}{\eps})/\eps^2)$.



\section{Rejection Sampling}
\ \\
{\bf Notation:} All occurrences of $\opt$ in this section are with respect to the set $B$. Let $\eps > 0$ be our target approximation factor. We call the distribution $\calP$ an $f$-Chamfer distribution for some $f = f(n, d, \eps)$, if it is supported on $A$ and for every $a \in A$,
\[f\frac{\opt_a}{\mathsf{CH}(A, B)} \leq \calP(a), \text{where we denote } \calP(a) := \ppr{x \sim \calP}{x = a}.\] 

We first show a general bound for estimating the Chamfer distance using samples from a Chamfer distribution. This follows from a standard analysis of importance sampling. 
\begin{lem}\label{Lemma:general}
    Let $X := \{x_i\}_{i \in [t]}$ be a set of $t$ samples drawn from a $f$-chamfer distribution $\calP$. Fix $h = h(n, d, \eps) \geq 1$. Given an arbitrarily $\tilde{\opt}_{x_i}$ for every $x_i$ that satisfies $\opt_{x_i} \leq \tilde{\opt}_{x_i} \leq h\cdot \opt_{x_i}$, then for any $0 < \kappa < 1$,
    \[\bpr{\tilde{\mathsf{CH}}(A, B) \leq (1-\kappa) \mathsf{CH}(A, B) } +  \bpr{\tilde{\mathsf{CH}}(A, B) \geq (1+\kappa) \cdot h \cdot \mathsf{CH}(A, B)} \leq \frac{\frac{h^2}{f}-1}{t \cdot \kappa^2}, \] where
    $\tilde{\mathsf{CH}}(A, B) := \frac{\sum_{i \in [t]}\tilde{\opt}_{x_i} / \calP(x_i) }{t }.$
\end{lem}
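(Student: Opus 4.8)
The plan is to recognize $\tilde{\mathsf{CH}}(A,B)$ as an average of $t$ i.i.d. copies of a single random variable $Y := \tilde{\opt}_x / \calP(x)$ where $x \sim \calP$, and to control its first two moments. First I would compute $\E[Y]$. Since $x$ is supported on $A$, we have $\E[Y] = \sum_{a \in A} \calP(a) \cdot \tilde{\opt}_a / \calP(a) = \sum_{a \in A}\tilde{\opt}_a$. The hypothesis $\opt_a \le \tilde{\opt}_a \le h \cdot \opt_a$ then gives $\mathsf{CH}(A,B) \le \E[Y] \le h \cdot \mathsf{CH}(A,B)$, since $\mathsf{CH}(A,B) = \sum_{a \in A}\opt_a$. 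Consequently $\E[\tilde{\mathsf{CH}}(A,B)] = \E[Y]$ lies in $[\mathsf{CH}(A,B),\, h\cdot\mathsf{CH}(A,B)]$, so the two tail events in the statement are both deviations of $\tilde{\mathsf{CH}}(A,B)$ from its mean by at least $\kappa \cdot \mathsf{CH}(A,B)$ (on the low side, $(1-\kappa)\mathsf{CH} \le (1-\kappa)\E[Y]$ so the event $\tilde{\mathsf{CH}} \le (1-\kappa)\mathsf{CH}$ implies $\tilde{\mathsf{CH}} \le \E[Y] - \kappa\mathsf{CH}$; on the high side, $(1+\kappa)h\mathsf{CH} \ge (1+\kappa)\E[Y] \ge \E[Y] + \kappa\mathsf{CH}$ using $\E[Y]\ge\mathsf{CH}$, so the event implies $\tilde{\mathsf{CH}} \ge \E[Y] + \kappa\mathsf{CH}$).

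Next I would bound $\mathrm{Var}(Y)$ via the second moment: $\E[Y^2] = \sum_{a\in A}\calP(a) \cdot \tilde{\opt}_a^2/\calP(a)^2 = \sum_{a\in A}\tilde{\opt}_a^2/\calP(a)$. Using $\tilde{\opt}_a \le h\,\opt_a$ in the numerator and the $f$-Chamfer lower bound $\calP(a) \ge f\,\opt_a/\mathsf{CH}(A,B)$ in the denominator, each term is at most $h^2\opt_a^2 \cdot \mathsf{CH}(A,B)/(f\,\opt_a) = (h^2/f)\,\opt_a\,\mathsf{CH}(A,B)$, so $\E[Y^2] \le (h^2/f)\,\mathsf{CH}(A,B)^2$. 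Hence $\mathrm{Var}(Y) = \E[Y^2] - \E[Y]^2 \le (h^2/f)\mathsf{CH}(A,B)^2 - \mathsf{CH}(A,B)^2 = \left(\tfrac{h^2}{f} - 1\right)\mathsf{CH}(A,B)^2$, using $\E[Y]^2 \ge \mathsf{CH}(A,B)^2$ again. (One should note $f \le 1$, e.g. by summing the Chamfer lower bound over $A$, so that $h^2/f \ge 1$ and the bound is meaningful.) Since $\tilde{\mathsf{CH}}(A,B)$ is the average of $t$ independent copies, $\mathrm{Var}(\tilde{\mathsf{CH}}(A,B)) = \mathrm{Var}(Y)/t \le \left(\tfrac{h^2}{f}-1\right)\mathsf{CH}(A,B)^2/t$.

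Finally I would apply Chebyshev's inequality: the sum of the two probabilities in the statement is at most $\Pr\big[\,|\tilde{\mathsf{CH}}(A,B) - \E[\tilde{\mathsf{CH}}(A,B)]| \ge \kappa\,\mathsf{CH}(A,B)\,\big] \le \mathrm{Var}(\tilde{\mathsf{CH}}(A,B))/(\kappa^2\mathsf{CH}(A,B)^2) \le \left(\tfrac{h^2}{f}-1\right)/(t\kappa^2)$, which is exactly the claimed bound. The only mild subtlety — the ``main obstacle,'' though it is routine — is being careful that the two one-sided events really do fit inside a single two-sided deviation event of radius $\kappa\,\mathsf{CH}(A,B)$; this needs both $\E[Y] \ge \mathsf{CH}(A,B)$ (for the upper tail) and $\E[Y] \le h\,\mathsf{CH}(A,B)$ is not even needed there, while the lower tail only uses $\E[Y]\ge \mathsf{CH}(A,B)$ — so the bookkeeping with the constants $h$ and $\kappa$ is the one place to tread carefully. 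Everything else is a direct moment computation plus Chebyshev.
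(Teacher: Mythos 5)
Your proof is correct and follows essentially the same route as the paper: compute the first and second moments of the importance-sampling estimator, bound the variance by $(\tfrac{h^2}{f}-1)\mathsf{CH}(A,B)^2/t$, and apply Chebyshev, with your containment of the two one-sided events in the two-sided deviation event of radius $\kappa\,\mathsf{CH}(A,B)$ spelled out more explicitly than in the paper. One tiny inconsistency in your closing aside: the upper-tail containment does use $\E[Y]\le h\,\mathsf{CH}(A,B)$ (in the step $(1+\kappa)h\,\mathsf{CH}(A,B)\ge(1+\kappa)\E[Y]$), exactly as your own earlier derivation shows, so the remark that it ``is not even needed there'' should be dropped.
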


\begin{proof}
For the purpose of analysis, assume that we additionally have arbitrary $\tilde{\opt}_{a}$ for $a \in (A \setminus X)$ that also satisfies $\opt_{a} \leq \tilde{\opt}_{a} \leq h\cdot \opt_{a}$. 
    By linearity,
    \[\e{\tilde{\mathsf{CH}}(A, B)} = \frac{\sum_{i \in [t]}\e{ \tilde{\opt}_{x_i} / \calP(x_i) }}{t}
        =  \sum_{a \in A} \calP(a) \cdot \frac{\tilde{\opt}_{a}}{\calP(a) }
        \in [\mathsf{CH}(A, B), h\cdot \mathsf{CH}(A, B)].\]
    We also bound the variance
    \begin{align*}
        \var{\tilde{\mathsf{CH}}(A, B)} &\leq \frac{\e{\tilde{\opt}_{x_1}^2 / \calP(x_1)^2}}{t} - \mathsf{CH}(A, B)^2\\
        &\leq \frac{1}{t} (\sum_{a \in A}  \frac{\tilde{\opt}_{a}^2}{\calP(a)} - \mathsf{CH}(A, B)^2) \\
        &\leq \frac{1}{t}(\frac{h}{f}\mathsf{CH}(A, B)\cdot \sum_{a \in A}  \tilde{\opt}_{a} - \mathsf{CH}(A, B)^2) \\
        &\leq \frac{1}{t}\cdot \mathsf{CH}(A, B)^2 \cdot (\frac{h^2}{f}- 1) 
    \end{align*}
    where the third inequality follows from $\frac{1}{\calP(a)} \leq \frac{\mathsf{CH}(A, B)}{f\cdot  \opt_a}$ and $\tilde{\opt}_a \leq  h \cdot \opt_a$. Finally, by Chebyshev's Inequality, we have
    \[\bpr{\Bigl| \tilde{\mathsf{CH}}(A, B) - \e{\tilde{\mathsf{CH}}(A, B)} \Bigr| \geq \kappa \cdot \mathsf{CH}(A, B)} \leq \frac{1}{t} \cdot \frac{\frac{h^2}{f}-1}{\kappa^2 }.\]
\end{proof}

In this section, we aim to construct a set of samples $S = \{s_j\}_{j \in [s]}$ for some large enough $s$, such that
each $s_j$ is drawn from a fixed $\OO(1)$-Chamfer distribution. Once we have $S$, we can compute a weighted sum of the nearest neighbor distances for $s_j \in S$, and invoke Lemma \ref{Lemma:general} to show that it is likely an $(1+\eps)$-estimation of $\mathsf{CH}(A, B)$.

We will construct such $S$ via a two-step sampling procedure: in the first step, we sample $\Theta({{D}}/\eps^2)$ points from $A$ using a distribution defined by the estimations from the $\QuadTree$ algorithm. In the second step, we subsample these $\Theta({{D}}/\eps^2)$ points, using an acceptance probability defined by the estimations from the $\Tournament$ algorithm. We describe our \textbf{Chamfer-Estimate} algorithm in Figure \ref{Figure:Chamfer}.

\begin{figure}[ht!]
	\begin{customizedFrame}{Chamfer-Estimate}
		
		\begin{flushleft}
			\noindent {\bf Input:} Two subsets $A,B$ of a metric space $(\R^d, \|\cdot\|_1)$ of size $n$, a parameter $\eps > 0$, and a parameter $q \in \N$.
			
			\noindent {\bf Output:} An estimated value  $\tilde{\mathsf{CH}}(A, B) \in \R$.
			
			\begin{enumerate}
				\item\label{Line:QuadTree} Execute the algorithm $\QuadTree(A, B)$, and let the output be a set of values $\{ \calD_a \}_{a \in A}$ which always satisfy $\calD_a \geq \opt_a$. Let $\calD := \sum_{a \in A} \calD_a $. 
                
				\item\label{Line:distribution} Construct a probability distribution $\calP$ supported on $A$ such that for every $a \in A$, $\calP(a) = \frac{\calD_a}{\calD}$. For $i \in [q]$, sample $x_i \sim \calP$.
                
				\item\label{Line:Tournament} Execute the algorithm $\Tournament(\{x_i\}_{i \in [q]}, B)$, and let the output be a set of values $\{ \calD'_{x_i} \}_{i \in [q]}$ which always satisfy $\calD'_{x_i} \geq \opt_{x_i}$. Let $\calD' := \sum_{i \in [q]} \frac{\calD'_{x_i}}{\calP(x_i)} / q$ and denote $\calP'(a) := \frac{\calD'_{a}}{\calD'}$ (which is well-defined only if $a = x_i$ for some $i \in [q]$).
                
				\item\label{Line:sample} Define
                \[M := \max_{i \in [q]} \frac{\calP'(x_i)}{\calP(x_i)}.\]
                For each $i \in [q]$, mark $x_i$ as \textsc{accepted} with probability $\frac{\calP'(x_i)}{M \cdot \calP(x_i)}$. 
                
                If the number of \textsc{accepted} $x_i$ is less than $s=10/\eps^2$ then output \textbf{Fail} and exit the algorithm. Otherwise, collect the first $s$ \textsc{accepted} $x_i$ as a set $S := \{s_j\}_{j \in [s]}$.

                \item\label{Line:exact} Compute $\opt_{s_j}$ for each $j$. Output \[\tilde{\mathsf{CH}}(A, B) := \sum_{j \in [s]} \frac{\opt_{s_j}}{\calP'(s_j)} / s.\]
			\end{enumerate}
		\end{flushleft}
	\end{customizedFrame}
	\caption{The \textbf{Chamfer-Estimate} Algorithm.}\label{Figure:Chamfer}
\end{figure}

The \textbf{Chamfer-Estimate} algorithm applies the $\QuadTree$ algorithm and the $\Tournament$ algorithm as subroutines. If they are executed successfully, their outputs should satisfy the following conditions:

\begin{cond}\label{Condition:QuadTree}
    We say the $\QuadTree$ algorithm succeeds if for every $a \in A$, $\e{\calD_a} \leq 5{{D}} \cdot \opt_{a}$.
\end{cond}

\begin{cond}\label{Condition:Tournament}
    We say the $\Tournament$ algorithm succeeds if for every $x_i$ for $i \in [q]$, $\calD'_{x_i} \leq 2\opt_{x_i}$. 
\end{cond}

That is,  as described in the introduction, we need $\QuadTree$ to provide $\OO(\log n)$-approximation (to ensure that the sample size $q$ can be at most logarithmic in $n$), and that $\Tournament$ provide $\OO(1)$-approximation (to ensure that the final estimator using $s$ samples has variance bounded by a constant). 

We state some facts about the \textbf{Chamfer-Estimate} algorithm, which will be useful for our analysis. 


\begin{clm}[Line \ref{Line:distribution}]\label{Claim:P}
    Under Condition \ref{Condition:QuadTree}, with probability at least $9/10$, $\calP$ is a $(1/50{{D}})$-Chamfer Distribution.
\end{clm}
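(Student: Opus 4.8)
The plan is a direct application of linearity of expectation, Markov's inequality, and the deterministic lower bound $\calD_a \ge \opt_a$ guaranteed by $\QuadTree$. Recall that $\calP(a) = \calD_a/\calD$ with $\calD = \sum_{a\in A}\calD_a$, and that, by the definition of a Chamfer distribution, ``$\calP$ is a $(1/50{{D}})$-Chamfer distribution'' means exactly that $\calP(a) \ge \frac{1}{50{{D}}}\cdot\frac{\opt_a}{\mathsf{CH}(A,B)}$ for every $a \in A$. Since $\mathsf{CH}(A,B) = \sum_{a\in A}\opt_a$, the whole content of the claim boils down to showing that the single scalar $\calD$ is, with probability at least $9/10$, not much larger than $50{{D}}\cdot\mathsf{CH}(A,B)$.

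First I would bound $\E[\calD]$. By linearity of expectation and Condition~\ref{Condition:QuadTree}, $\E[\calD] = \sum_{a\in A}\E[\calD_a] \le \sum_{a\in A} 5{{D}}\cdot\opt_a = 5{{D}}\cdot\mathsf{CH}(A,B)$. Markov's inequality then gives $\Pr\!\left[\calD \ge 50{{D}}\cdot\mathsf{CH}(A,B)\right] \le \frac{\E[\calD]}{50{{D}}\cdot\mathsf{CH}(A,B)} \le \frac{1}{10}$, so with probability at least $9/10$ we have $\calD \le 50{{D}}\cdot\mathsf{CH}(A,B)$.

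Next I would condition on that event. For each $a\in A$, the $\QuadTree$ output \emph{always} satisfies $\calD_a \ge \opt_a$ (this is part of the output guarantee of Figure~\ref{Figure:QuadTree}, and it holds with certainty, not merely in expectation), hence
\[
\calP(a) = \frac{\calD_a}{\calD} \ge \frac{\opt_a}{50{{D}}\cdot\mathsf{CH}(A,B)} = \frac{1}{50{{D}}}\cdot\frac{\opt_a}{\mathsf{CH}(A,B)},
\]
which is precisely the defining inequality of a $(1/50{{D}})$-Chamfer distribution. This holds for all $a\in A$ simultaneously, since the event $\{\calD \le 50{{D}}\cdot\mathsf{CH}(A,B)\}$ does not depend on $a$.

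I do not expect a real obstacle here; the one thing to be careful about is the direction of the inequalities. The definition requires a \emph{lower} bound on $\calP(a)$, so we must pair an \emph{upper} bound on the denominator $\calD$ (obtained from $\E[\calD]$ via Markov) with a \emph{lower} bound on the numerator $\calD_a$, and it is essential that the latter ($\calD_a \ge \opt_a$) is an almost-sure guarantee — an in-expectation version would not survive conditioning on the Markov event.
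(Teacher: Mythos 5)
Your proof is correct and follows exactly the paper's argument: bound $\E[\calD]\le 5{{D}}\cdot\mathsf{CH}(A,B)$ via Condition~\ref{Condition:QuadTree} and linearity, apply Markov to get $\calD\le 50{{D}}\cdot\mathsf{CH}(A,B)$ with probability $9/10$, then combine with the almost-sure guarantee $\calD_a\ge\opt_a$. The only difference is that you spell out the intermediate steps the paper leaves implicit.
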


\begin{proof}
    With probability at least $9/10$, $\calD \leq 50{{D}}\cdot \mathsf{CH}(A, B)$ by Markov's Inequality. Upon this condition, for any $a \in A$,
    $\frac{\opt_{a}}{50{{D}}\cdot  \mathsf{CH}(A, B)} \leq \frac{\calD_a}{\calD}.$
\end{proof}

\begin{clm}[Line \ref{Line:Tournament}]\label{Claim:D'}       Let $q \geq 10^4 {{D}}$. Under Condition \ref{Condition:QuadTree} and \ref{Condition:Tournament}, with probability at least $4/5$, $\calD' \geq \mathsf{CH}(A, B)/2$. 
\end{clm}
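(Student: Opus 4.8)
The plan is to throw away the upper-bound half of the $\Tournament$ guarantee and reduce the claim to a one-sided concentration statement for a clean importance-sampling estimator. Since the $\Tournament$ output always satisfies $\calD'_{x_i}\ge\opt_{x_i}$, one has $\calD'=\tfrac1q\sum_{i\in[q]}\tfrac{\calD'_{x_i}}{\calP(x_i)}\ge Y:=\tfrac1q\sum_{i\in[q]}\tfrac{\opt_{x_i}}{\calP(x_i)}$, so it suffices to show $\pr{Y\ge\mathsf{CH}(A,B)/2}\ge 4/5$; in particular Condition \ref{Condition:Tournament} is not needed for this direction, only Condition \ref{Condition:QuadTree} and the basic output property $\calD_a\ge\opt_a$ of $\QuadTree$. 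Throughout I would keep careful track of the two stages of randomness: first the $\QuadTree$ coins fix $\{\calD_a\}_{a\in A}$ and hence $\calP(a)=\calD_a/\calD$, and only afterwards are $x_1,\dots,x_q$ drawn i.i.d.\ from $\calP$; the argument is a second-moment (Chebyshev) bound on $Y$ conditioned on $\calP$, averaged against a high-probability event for $\calD$.

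Conditioning on $\calP$, I would first compute the first two moments of $Y$. Linearity gives $\e{Y\mid\calP}=\sum_{a\in A}\calP(a)\cdot\tfrac{\opt_a}{\calP(a)}=\mathsf{CH}(A,B)$. Since the $q$ summands are i.i.d.\ given $\calP$, $\var{Y\mid\calP}\le\tfrac1q\,\e{(\opt_{x_1}/\calP(x_1))^2\mid\calP}=\tfrac1q\sum_{a\in A}\tfrac{\opt_a^2}{\calP(a)}$. The key inequality is $\calP(a)=\calD_a/\calD\ge\opt_a/\calD$ (from $\calD_a\ge\opt_a$), which gives $\tfrac{\opt_a^2}{\calP(a)}\le\opt_a\cdot\calD$ for every $a$ with $\opt_a>0$, and the term is $0$ when $\opt_a=0$; summing yields $\var{Y\mid\calP}\le\calD\cdot\mathsf{CH}(A,B)/q$.

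Next I would invoke Condition \ref{Condition:QuadTree}, exactly as in the proof of Claim \ref{Claim:P}: $\e{\calD}=\sum_a\e{\calD_a}\le 5{{D}}\cdot\mathsf{CH}(A,B)$, so Markov's inequality gives $\pr{\calD\le 50{{D}}\cdot\mathsf{CH}(A,B)}\ge 9/10$. On the event $\{\calD\le 50{{D}}\cdot\mathsf{CH}(A,B)\}$ the variance bound becomes $\var{Y\mid\calP}\le 50{{D}}\cdot\mathsf{CH}(A,B)^2/q\le\mathsf{CH}(A,B)^2/200$ using $q\ge 10^4{{D}}$, and Chebyshev's inequality (centered at $\e{Y\mid\calP}=\mathsf{CH}(A,B)$) gives $\pr{Y<\mathsf{CH}(A,B)/2\mid\calP}\le\pr{|Y-\mathsf{CH}(A,B)|\ge\mathsf{CH}(A,B)/2\mid\calP}\le\tfrac{\mathsf{CH}(A,B)^2/200}{(\mathsf{CH}(A,B)/2)^2}=\tfrac1{50}$. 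A union bound over the failure of the $\calD$ event and the Chebyshev failure then gives $\pr{\calD'<\mathsf{CH}(A,B)/2}\le\pr{Y<\mathsf{CH}(A,B)/2}\le\tfrac1{10}+\tfrac1{50}=\tfrac{3}{25}<\tfrac15$, which is the claim (and trivially $\calD'\ge 0=\mathsf{CH}(A,B)/2$ in the degenerate case $\mathsf{CH}(A,B)=0$).

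There is no serious obstacle here; the only point requiring care is the two-stage bookkeeping just described — the variance bound must be attached to the \emph{realized} value of $\calD$ so that it can be combined with the high-probability bound $\pr{\calD\le 50{{D}}\cdot\mathsf{CH}(A,B)}\ge 9/10$ rather than with $\e{\calD}$ — and everything else is a routine first- and second-moment computation, with the chosen constants ($q\ge 10^4{{D}}$, the factor $50{{D}}$) leaving comfortable slack.
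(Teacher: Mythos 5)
Your proof is correct and follows essentially the same route as the paper: the paper simply invokes Lemma \ref{Lemma:general} (with $f=1/(50{{D}})$, $h=2$, $\kappa=1/2$) on top of Claim \ref{Claim:P}, which is exactly the Markov-plus-Chebyshev importance-sampling computation you carry out by hand. Your one refinement — lower-bounding $\calD'$ by the exact-value estimator $Y$ so that only $\calD'_{x_i}\geq\opt_{x_i}$ is used and Condition \ref{Condition:Tournament} becomes unnecessary — is valid and slightly strengthens the statement, but does not change the substance of the argument.
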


\begin{proof}
    We apply the importance sampling analysis in Lemma \ref{Lemma:general}. We assume that Claim \ref{Claim:P} holds and $\opt_{x_i} \leq \calD'_{x_i} \leq 2\opt_{x_i}$, then
    \[\pr{\calD' \leq (1-\frac{1}{2})\mathsf{CH}(A, B)} \leq \frac{2^2\cdot 50{{D}} - 1}{q \cdot (\frac{1}{2})^2} <\frac{1}{10}.\]
\end{proof}

{\bf Analysis of $S$:} We now show that the set $S$ on Line \ref{Line:sample} collects enough samples (thus the algorithm does not fail) and is equivalent to sampling from a $\OO(1)$-Chamfer distribution $\calQ$. We note that the algorithm, in fact, only knows a $(1/50{{D}})$-Chamfer distribution $\calP$ and probabilities $\calP'(x_i)$ for $\{x_i\}_{i \in [q]}$, so it cannot explicitly sample from such $\calQ$. Nevertheless, by a standard analysis of rejection sampling, we show that $S$ ``simulates'' sampling from $\calQ$.

\begin{lem}\label{Lemma:size}
    Let $q \geq 10^4{{D}}/\eps^2$. Under Condition \ref{Condition:QuadTree} and \ref{Condition:Tournament}, with probability at least $3/5$, the number of \textsc{accepted} $x_i$ is at least $s$, so the algorithm does not fail.
\end{lem}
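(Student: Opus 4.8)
The plan is to show that the expected number of \textsc{accepted} points is at least $2s$, and then apply a concentration (second-moment / Chebyshev) argument to conclude that with constant probability the actual count exceeds $s$. First I would unpack the acceptance rule: each $x_i$ is accepted independently with probability $\frac{\calP'(x_i)}{M\cdot \calP(x_i)}$, so the expected number of accepted points is $\frac{1}{M}\sum_{i\in[q]}\frac{\calP'(x_i)}{\calP(x_i)}$. Now $\sum_{i\in[q]}\frac{\calP'(x_i)}{\calP(x_i)} = \sum_{i\in[q]}\frac{\calD'_{x_i}}{\calD'\,\calP(x_i)} = \frac{q\calD'}{\calD'} = q$ by the definition of $\calD'$ on Line~\ref{Line:Tournament}. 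So the expected count is exactly $q/M$, and it remains to show $M$ is not too large, i.e. $q/M \geq 2s = 20/\eps^2$.

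The heart of the argument is bounding $M = \max_{i\in[q]}\frac{\calP'(x_i)}{\calP(x_i)} = \max_{i\in[q]}\frac{\calD'_{x_i}}{\calD'\,\calP(x_i)}$. Under Condition~\ref{Condition:Tournament} we have $\calD'_{x_i}\le 2\opt_{x_i}$, and under Claim~\ref{Claim:P} (which holds with probability $\ge 9/10$) $\calP$ is a $(1/50D)$-Chamfer distribution, so $\calP(x_i)\ge \frac{\opt_{x_i}}{50D\cdot\CH(A,B)}$, giving $\frac{\calD'_{x_i}}{\calP(x_i)}\le 100D\cdot\CH(A,B)$. Combined with Claim~\ref{Claim:D'} (which gives $\calD'\ge\CH(A,B)/2$ with probability $\ge 4/5$), we get $M \le \frac{100D\cdot\CH(A,B)}{\CH(A,B)/2} = 200D$. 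Hence the expected number of accepted points is at least $q/(200D) \ge \frac{10^4 D/\eps^2}{200D} = 50/\eps^2 = 5s$ (using $q\ge 10^4 D/\eps^2$), comfortably above $2s$. I would take a union bound so that Claim~\ref{Claim:P} and Claim~\ref{Claim:D'} both hold; this costs at most $1/10+1/5$ in failure probability, leaving slack.

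Finally I would apply concentration to the sum $N:=\sum_{i\in[q]}\mathds{1}[x_i\text{ accepted}]$. Conditioned on the realized values $\{x_i\}$ and $\{\calD'_{x_i}\}$ (hence on $M$), the indicators are independent Bernoulli variables, so $\var{N}\le\e{N}$, and Chebyshev gives $\Pr[N < s] \le \Pr[N < \tfrac12\e{N}] \le \frac{4\var{N}}{\e{N}^2}\le\frac{4}{\e{N}}\le \frac{4\eps^2}{50}$, which is small since $\eps\le 1$. Combining this with the failure probabilities of Claims~\ref{Claim:P} and~\ref{Claim:D'} via a union bound yields overall success probability at least $3/5$. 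The main obstacle, such as it is, is making sure the conditioning is handled cleanly: $M$ is itself a random variable depending on the sample, so the Chebyshev step should be applied on the event that $M\le 200D$ (guaranteed by the two claims), and one should verify that on that event $\e{N}=q/M$ is large enough, rather than trying to reason about the unconditional distribution of $N$ directly.
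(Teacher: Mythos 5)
Your proposal follows the paper's proof essentially verbatim: bound $M\le 200D$ via Claims~\ref{Claim:P} and~\ref{Claim:D'}, compute the expected number of accepted samples as $q/M\ge q/(200D)=5s$, and finish with a concentration bound plus a union bound. Your Chebyshev step on the independent Bernoulli indicators is in fact spelled out more carefully than the paper's terse appeal to ``Markov's Inequality,'' and your handling of the conditioning on $M$ is correct.
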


\begin{proof}
    We assume that Claim \ref{Claim:P} and \ref{Claim:D'} hold. Then for any $x_i$, $\frac{1}{\calP(x_i)} \leq \frac{50{{D}} \cdot \mathsf{CH}(A, B)}{\opt_{x_i}}$ and $\calP'(x_i) =\frac{\calD'_a}{\calD'} \leq \frac{2\opt_{x_i}}{\mathsf{CH}(A,B)/2}$. Thus $M \leq 200{{D}}$.
    The expectation is 
    \[ 
        \e{|\{\textsc{accepted } x_i\}|} = \sum_{i \in [q]} \frac{\calP'(x_i)}{M \calP({x_i})} 
        \geq \frac{1}{200{{D}}}  \sum_{i \in [q]} \frac{\calD'_{x_i}}{\calP(x_i)}\cdot\frac{1}{\calD'}
        = \frac{1}{200{{D}}} \cdot  q \calD'\cdot\frac{1}{\calD'}
        = \frac{q}{200 {{D}}}
    \]
    where the second to last equality is due to the definition of $\calD' := \sum_{i \in [q]}\frac{\calD'_{x_i}}{\calP(x_i)}/ q$. The final bound holds by Markov's Inequality and our setting of $q$.
\end{proof}


\begin{lem}\label{Lemma:equivalent}
    Each $s_j$ is independently and identically distributed, and under Condition \ref{Condition:Tournament}, $\pr{s_j = a} \geq \frac{\opt_a}{2\mathsf{CH}(A, B)}$ for any $a \in A$.
\end{lem}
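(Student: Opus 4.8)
The plan is to analyze the rejection-sampling step (Line~\ref{Line:sample}) as a standard accept/reject scheme and identify the target distribution $\calQ$ that the accepted samples follow. First I would observe that, conditioned on the sampled multiset $\{x_i\}_{i\in[q]}$ and the outputs of $\QuadTree$ and $\Tournament$ (hence on the values $\calP(x_i)$, $\calP'(x_i)$, and $M$), the $x_i$ are marked \textsc{accepted} independently, each with probability $\calP'(x_i)/(M\cdot\calP(x_i))\le 1$. A classical fact about rejection sampling is that, given that the scan produces at least $s$ accepted points, the first $s$ of them are i.i.d.\ with distribution equal to the conditional law of a single $x_i$ given it is accepted; since the $x_i$ are themselves i.i.d.\ from $\calP$, this conditional law assigns to $a\in A$ probability proportional to $\calP(a)\cdot\frac{\calP'(a)}{M\,\calP(a)}=\frac{\calP'(a)}{M}$. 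Thus each $s_j$ is i.i.d.\ from the distribution $\calQ$ with $\calQ(a)\propto\calP'(a)$ — more precisely $\calQ(a)=\calP'(a)/(\sum_{i\in[q]}\calP'(x_i))$ — which gives the ``identically and independently distributed'' part of the claim. I would be slightly careful here: $\calP'$ is only defined on the support $\{x_i\}$, and the normalization is over the sampled points, so $\calQ$ is really a (random) distribution depending on the first-stage sample; the independence statement should be read as ``conditionally i.i.d.\ given the first stage,'' which is the standard and intended reading.

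Next I would prove the lower bound $\Pr[s_j=a]\ge\frac{\opt_a}{2\,\mathsf{CH}(A,B)}$. Working conditionally on the first stage and on acceptance, $\calQ(a)=\calP'(a)/Z$ where $Z=\sum_{i\in[q]}\calP'(x_i)$. By the definition of $\calP'$, namely $\calP'(x_i)=\calD'_{x_i}/\calD'$, we get $Z=\frac{1}{\calD'}\sum_{i\in[q]}\calD'_{x_i}$, but this sum is over the unweighted samples, whereas $\calD'=\frac1q\sum_{i\in[q]}\calD'_{x_i}/\calP(x_i)$ is the importance-weighted average. The cleaner route is: a single sample $x_i$ lands on $a$ with probability $\calP(a)$, and conditioned on that it is accepted with probability $\frac{\calP'(a)}{M\,\calP(a)}=\frac{\calD'_a}{M\,\calD'\,\calP(a)}$, so the unconditional probability that $x_i$ equals $a$ and is accepted is $\frac{\calD'_a}{M\,\calD'}$. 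Summing the acceptance probabilities over $a$ and using $\sum_a \calD'_a$ restricted to the support — here I would instead directly bound $\calQ(a)=\frac{\Pr[x_i=a,\ \text{accepted}]}{\Pr[x_i\ \text{accepted}]}$. The numerator is $\calP(a)\cdot\frac{\calP'(a)}{M\calP(a)}=\frac{\calP'(a)}{M}$ and the denominator is $\le 1$, so $\calQ(a)\ge \calP'(a)/M = \frac{\calD'_a}{M\,\calD'}$. Under Condition~\ref{Condition:Tournament}, $\calD'_a\ge \opt_a$; and from Claim~\ref{Claim:D'} (which holds with the stated probability under the two Conditions) $\calD'\le$ — wait, we need an \emph{upper} bound on $M\calD'$. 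From the proof of Lemma~\ref{Lemma:size}, $M\le 200{{D}}$, and $\calD'\le \mathsf{CH}(A,B)$ — I would instead combine $M\,\calD'$: since $M=\max_i \calP'(x_i)/\calP(x_i)$ and $\calP'(x_i)=\calD'_{x_i}/\calD'$, we have $M\calD' = \max_i \calD'_{x_i}/\calP(x_i)$, and I want $\calD'_a/(M\calD')\ge \opt_a/(2\mathsf{CH}(A,B))$, i.e.\ $M\calD'\le 2\mathsf{CH}(A,B)\cdot \calD'_a/\opt_a \le 4\mathsf{CH}(A,B)$ using $\calD'_a\le 2\opt_a$; so it suffices to show $\max_i \calD'_{x_i}/\calP(x_i)\le 4\mathsf{CH}(A,B)$. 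But $\calD'_{x_i}/\calP(x_i)\le 2\opt_{x_i}\cdot \calD/\calD_{x_i}\le 2\calD$ since $\calD_{x_i}\ge\opt_{x_i}$, which is only $\le 2\calD$, not $O(\mathsf{CH})$. The honest fix is to only claim the bound conditionally on the good events of Claims~\ref{Claim:P} and~\ref{Claim:D'} (under which $M\le 200{{D}}$ and $\calD'\ge\mathsf{CH}(A,B)/2$ are \emph{not} quite enough either, since we need $M\calD'$ small, i.e.\ we need $\calD'$ small or $M$ small on the same event).

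Given this tension, the main obstacle — and the step I would spend the most care on — is getting a clean \emph{upper} bound on $M\cdot\calD'$ (equivalently $\max_{i}\calD'_{x_i}/\calP(x_i)$) that is $O(\mathsf{CH}(A,B))$ with good probability, rather than just $O(\calD)=O({{D}}\cdot\mathsf{CH})$. I expect the intended argument is: conditioned on Claim~\ref{Claim:P} ($1/\calP(x_i)\le 50{{D}}\,\mathsf{CH}/\opt_{x_i}$) and Condition~\ref{Condition:Tournament} ($\calD'_{x_i}\le 2\opt_{x_i}$) we get $M\calD' = \max_i \calD'_{x_i}/\calP(x_i)\le 100{{D}}\,\mathsf{CH}$, and then $\calQ(a)\ge \calD'_a/(M\calD')\ge \opt_a/(100{{D}}\,\mathsf{CH})$ — but the lemma asserts the stronger $\opt_a/(2\mathsf{CH})$, with no $1/{{D}}$ factor, so the bound must instead come from $M\calD'=\max_i\calD'_{x_i}/\calP(x_i)$ being compared against the \emph{average} $\calD'=\frac1q\sum\calD'_{x_i}/\calP(x_i)$: i.e.\ $\calP'(a)/M = (\calD'_a/\calD')/(\,\max_i(\calD'_{x_i}/\calP(x_i))/\calD'\,) = \calD'_a/\max_i(\calD'_{x_i}/\calP(x_i))$, and we want this $\ge\opt_a/(2\mathsf{CH})$. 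So I would bound $\max_i \calD'_{x_i}/\calP(x_i)\le 2\mathsf{CH}(A,B)\cdot(\calD'_a/\opt_a)$ is false in general; the correct reading must be that the lemma is stated for the \emph{unconditional} probability over all of the algorithm's randomness, absorbing the failure probabilities of Claims~\ref{Claim:P},~\ref{Claim:D'}. I would therefore structure the final proof as: (i) establish the conditional-i.i.d.\ structure and $\calQ(a)\propto\calP'(a)$ as above; (ii) on the intersection of the good events of Claims~\ref{Claim:P} and~\ref{Claim:D'} (and Condition~\ref{Condition:Tournament}), show $M\le \frac{2\mathsf{CH}(A,B)}{\calD'}\cdot$ (something) so that $\calQ(a)=\calP'(a)/(M\cdot(\text{accept prob}))\ge \calD'_a/(2\,\calD'\,\mathsf{CH}(A,B)/\mathsf{CH}(A,B))$ — then plug $\calD'_a\ge\opt_a$ and $\calD'\ge\mathsf{CH}(A,B)/2$ to land exactly on $\opt_a/(2\mathsf{CH}(A,B))$. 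Concretely: $\calQ(a)\ge \calP'(a)/M = \calD'_a/(\calD'\cdot M)$, and since $M\calD' = \max_i\calD'_{x_i}/\calP(x_i)$, and under Claims~\ref{Claim:P},~\ref{Claim:D'} one checks $\max_i\calD'_{x_i}/\calP(x_i)\le 100{{D}}\,\mathsf{CH}(A,B)$; comparing to the target means the factor ${{D}}$ must cancel against the number of samples $q=\Theta({{D}}/\eps^2)$, which it does \emph{not} pointwise — so I conclude the only internally consistent proof is the short rejection-sampling argument giving $\calQ(a)=\calP'(a)/\sum_i\calP'(x_i)\ge \frac{\calD'_a}{2\calD'}\cdot\frac{1}{(\text{something }\le 1)}$, where $\sum_i\calP'(x_i)\le \sum_i\calD'_{x_i}/(\calP(x_i)\,\cdot\,(\calD'\cdot q)/q)\cdots$ — i.e., I would write $\sum_{i}\calP'(x_i) = q\calD'/\!\big(\text{denominator}\big)$ using $\calD'=\frac1q\sum \calD'_{x_i}/\calP(x_i)$ together with $\calP(x_i)\le 1$, yielding $\sum_i\calP'(x_i)\le q\cdot\frac{\calD'}{\calD'}=q$... and then $\calQ(a)=\calP'(a)/(\le q)$, combined with $E[|S|]$ being a constant fraction of $q$, recovers the bound. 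I will resolve this bookkeeping in the write-up; the essential point — conditionally i.i.d.\ with density proportional to $\calP'\propto\calD'$, hence $\gtrsim\opt_a/\mathsf{CH}$ by Condition~\ref{Condition:Tournament} plus Claim~\ref{Claim:D'} — is robust, and is what the subsequent application of Lemma~\ref{Lemma:general} needs.
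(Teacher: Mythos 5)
Your opening is on the right track: the accepted samples are i.i.d.\ with law proportional to $\calP(a)\cdot\frac{\calP'(a)}{M\calP(a)}=\calP'(a)/M$. But you then normalize over the \emph{realized first-stage sample}, writing $\calQ(a)=\calP'(a)/\sum_{i\in[q]}\calP'(x_i)$, and this is what sends you chasing upper bounds on $M\calD'$ and $\max_i \calD'_{x_i}/\calP(x_i)$ that you (correctly) find you cannot establish at the required strength. The missing idea is that the lemma is about the marginal law of $s_j$ over \emph{all} the randomness, including the draw $x_i\sim\calP$; so by Bayes the denominator is $\Pr[x_i\text{ accepted}]=\sum_{a_0\in A}\calP(a_0)\cdot\frac{\calP'(a_0)}{M\calP(a_0)}=\frac{1}{M}\sum_{a_0\in A}\calP'(a_0)$, a sum over the whole of $A$ (with $\calP'(a_0)=\calD'(a_0)/\calD'$ interpreted, conditional on $x_i=a_0$, via the Tournament's output for $a_0$). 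Then $M$, $\calP(a)$, and $\calD'$ all cancel between numerator and denominator:
\[
\Pr[s_j=a]\;=\;\frac{\calP'(a)}{\sum_{a_0\in A}\calP'(a_0)}\;=\;\frac{\calD'(a)}{\sum_{a_0\in A}\calD'(a_0)}\;\ge\;\frac{\opt_a}{\sum_{a_0\in A}2\,\opt_{a_0}}\;=\;\frac{\opt_a}{2\,\mathsf{CH}(A,B)},
\]
using only Condition~\ref{Condition:Tournament} ($\opt_{a_0}\le\calD'(a_0)\le 2\opt_{a_0}$). No bound on $M$, no Claim~\ref{Claim:P}, and no Claim~\ref{Claim:D'} is needed; the factor of $2$ in the lemma is exactly the Tournament's approximation factor, not a consequence of $\calD'\ge\mathsf{CH}(A,B)/2$.

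Concretely: your proposal as written does not close — the final paragraph cycles through several candidate bounds ($M\le 200D$, $M\calD'\le 100D\cdot\mathsf{CH}$, $\sum_i\calP'(x_i)\le q$) none of which yields $\opt_a/(2\mathsf{CH}(A,B))$, and you acknowledge the bookkeeping is unresolved. The fix is the change of denominator above, i.e., do not condition on the first-stage sample when computing $\Pr[s_j=a]$.
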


\begin{proof}
    The independence and identicality follows directly from our sampling procedure. For the probability statement, we assume (without loss of generality) that during rejection sampling on Line \ref{Line:sample}, a sample $x_i$ is accepted and renamed as $s_j$. Then for any $a \in A$,
    \begin{align*}
        \pr{s_j = a} &= \pr{x_i = a \mid x_i\text{ accepted}} \\
        &= \frac{\calP(a) \cdot \pr{x_i\text{ accepted} \mid x_i = a}}{\pr{x_i\text{ accepted}}} \\
        &= \frac{\calP(a) \cdot \pr{x_i\text{ accepted} \mid x_i = a}}{\sum_{a_0 \in A} \calP(a_0) \cdot \pr{x_i\text{ accepted} \mid x_i = a_0}} \\
        &= \frac{\calP(a) \cdot \frac{\calP'(a)}{M\calP(a)}}{\sum_{a_0 \in A} \calP(a_0) \cdot \frac{\calP'(a_0)}{M\calP(a_0)}}
    \end{align*}
In the final equality, because we conditioned on $x_i = a$ (resp. $x_i = a_0$) on the LHS, we know that on the RHS, $\calP'(a) = \frac{\calD'(a)}{\calD'}$ is well-defined and satisfy $\opt_a \leq \calD'(a) \leq 2\opt_a$ (resp. $\opt_{a_0} \leq \calD'(a_0) \leq 2\opt_{a_0}$), given Condition \ref{Condition:Tournament}. Therefore, we have
\[\pr{s_j = a} = \frac{\calP'(a)}{\sum_{a_0 \in A}\calP'(a_0)} = \frac{\calD'(a)}{\sum_{a_0 \in A}\calD'(a_0)} \geq \frac{\opt_a}{2\mathsf{CH}(A, B)}.\]
\end{proof}

Lemma \ref{Lemma:size} and \ref{Lemma:equivalent} together say that $S$ can be viewed as a set of $s$ samples from a $\frac{1}{2}$-Chamfer Distribution, thus we can invoke another importance sampling analysis. In the final step of the algorithm, we compute the exact nearest neighbor distance for all $s_j$ and then compute a weighted sum over them. With high probability, this gives an $(1\pm\eps)$-estimation of $\mathsf{CH}(A, B)$.

\begin{theorem}
    Under Condition \ref{Condition:QuadTree} and \ref{Condition:Tournament}, {\normalfont \textbf{Chamfer-Estimate}($A, B, \eps, q \geq 10^4{{D}}/\eps^2$)} outputs $\tilde{\mathsf{CH}}(A, B)$ that satisfies $(1-\eps)\mathsf{CH}(A, B) \leq \tilde{\mathsf{CH}}(A, B) \leq (1+\eps)\mathsf{CH}(A, B)$ with probability at least $1/2$.
\end{theorem}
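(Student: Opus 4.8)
The plan is to combine the three ingredients already assembled — the $\QuadTree$ estimates, the $\Tournament$ estimates, and the rejection-sampling reduction — via the importance-sampling bound of Lemma~\ref{Lemma:general}. Concretely, by Lemma~\ref{Lemma:size} the algorithm reaches Line~\ref{Line:exact} (i.e.\ does not output \textbf{Fail}) with probability at least $3/5$, conditioned on Conditions~\ref{Condition:QuadTree} and~\ref{Condition:Tournament}. By Lemma~\ref{Lemma:equivalent}, once we are at Line~\ref{Line:exact}, the set $S = \{s_j\}_{j \in [s]}$ consists of $s = 10/\eps^2$ i.i.d.\ samples from a distribution $\calQ$ that is a $\tfrac{1}{2}$-Chamfer distribution in the sense of the Preliminaries (i.e.\ $\calQ(a) \ge \tfrac{\opt_a}{2\,\mathsf{CH}(A,B)}$ for all $a$). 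The final estimator $\tilde{\mathsf{CH}}(A,B) = \tfrac{1}{s}\sum_{j\in[s]} \opt_{s_j}/\calP'(s_j)$ uses the \emph{exact} values $\opt_{s_j}$ computed on Line~\ref{Line:exact}, so in the language of Lemma~\ref{Lemma:general} we take $\tilde\opt_{s_j} = \opt_{s_j}$, which trivially satisfies $\opt_{s_j} \le \tilde\opt_{s_j} \le h\cdot\opt_{s_j}$ with $h = 1$.

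\textbf{First} I would check that $\calP'(s_j) = \calQ(s_j)$, so that the weights $1/\calP'(s_j)$ used in the estimator are exactly the inverse-probability weights $1/\calQ(s_j)$ required by importance sampling. This is where a small subtlety lies: Lemma~\ref{Lemma:equivalent}'s proof shows $\pr{s_j = a} = \calP'(a)/\sum_{a_0}\calP'(a_0)$, whereas the estimator divides by $\calP'(s_j) = \calD'(s_j)/\calD'$ directly, not by the normalized version. I would argue that this discrepancy is harmless: writing $Z := \sum_{a_0 \in A}\calP'(a_0)$, the estimator is $\tfrac{1}{s}\sum_j \opt_{s_j}/\calP'(s_j)$ while $\calQ(s_j) = \calP'(s_j)/Z$, so $\E[\tilde{\mathsf{CH}}] = \sum_a \calQ(a)\cdot \opt_a/\calP'(a) = \tfrac{1}{Z}\sum_a \opt_a = \mathsf{CH}(A,B)/Z$; but under Conditions~\ref{Condition:QuadTree}--\ref{Condition:Tournament} and Claim~\ref{Claim:D'}, $Z = \sum_a \calD'(a)/\calD' $ — and one must be careful here about which $a$ have $\calD'(a)$ defined. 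The cleanest route is to note that $\calP'$ as used in the estimator and in Lemma~\ref{Lemma:equivalent} are the \emph{same} object, that the sampling distribution $\calQ$ of each $s_j$ is $\calP'$ \emph{renormalized}, and that the estimator $\sum_j \opt_{s_j}/\calP'(s_j)/s$ is therefore off from the ideal importance-sampling estimator $\sum_j \opt_{s_j}/\calQ(s_j)/s$ by exactly the factor $Z$; then show $Z \in [1, \OO(1)]$ or, more precisely, invoke Lemma~\ref{Lemma:general} with the $f$-Chamfer distribution being $\calP'$-as-a-sub-probability, absorbing the normalization into the constant $h^2/f$. Given the paper's framing, the intended reading is almost surely that $\calQ = \calP'$ exactly (the accepted samples are distributed proportionally to $\calP'$, which already sums to a quantity comparable to $1$), so I would state that $S$ is $s$ i.i.d.\ draws from the $\tfrac{1}{2}$-Chamfer distribution $\calQ := \calP'(\cdot)/\sum_{a}\calP'(a)$ and that $\calP'(s_j) = \Theta(1)\cdot\calQ(s_j)$.

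\textbf{Then} I would apply Lemma~\ref{Lemma:general} with $f = 1/2$, $h = 1$, $\kappa = \eps$, and $t = s = 10/\eps^2$: the failure probability bound is $\tfrac{h^2/f - 1}{s\kappa^2} = \tfrac{2 - 1}{(10/\eps^2)\cdot\eps^2} = \tfrac{1}{10}$, so with probability at least $9/10$ the estimator satisfies $(1-\eps)\mathsf{CH}(A,B) \le \tilde{\mathsf{CH}}(A,B) \le (1+\eps)\mathsf{CH}(A,B)$ (the upper bound using $h=1$ so that $(1+\kappa)h = 1+\eps$). \textbf{Finally} I would take a union bound over the conditioning events: we are already assuming Conditions~\ref{Condition:QuadTree} and~\ref{Condition:Tournament}; on top of that we need Claim~\ref{Claim:P} (probability $\ge 9/10$), Claim~\ref{Claim:D'} (probability $\ge 4/5$), Lemma~\ref{Lemma:size} so the algorithm does not fail (probability $\ge 3/5$), and the importance-sampling concentration just derived (probability $\ge 9/10$). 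Summing the failure probabilities $\tfrac{1}{10} + \tfrac{1}{5} + \tfrac{2}{5} + \tfrac{1}{10} = \tfrac{4}{5}$ gives overall success probability $\ge 1/5$ — which is short of the claimed $1/2$, so the constants in $q$, $s$, or the individual claims must be the lever: I would instead invoke sharper constants (e.g.\ $q \ge 10^4 D/\eps^2$ already gives more slack in Lemma~\ref{Lemma:size} than stated, and $s$ can be enlarged), so that each of the four events holds with probability $\ge 1 - \tfrac{1}{8}$, yielding success $\ge 1/2$. \textbf{The main obstacle} I expect is exactly this bookkeeping: tracking the normalization factor $\sum_a \calP'(a)$ relating $\calP'$ to the true sampling distribution $\calQ$ (and verifying it lies in a bounded range so the estimator is genuinely unbiased up to a controllable factor), and then re-tuning the constants across Claims~\ref{Claim:P}, \ref{Claim:D'}, Lemmas~\ref{Lemma:size}, \ref{Lemma:equivalent} and the final Chebyshev step so that the union bound clears $1/2$ rather than merely a small positive constant.
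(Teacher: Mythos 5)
Your overall route is exactly the paper's: condition on the success of Lemma~\ref{Lemma:size}, treat $S$ via Lemma~\ref{Lemma:equivalent} as $s$ i.i.d.\ draws from a $\tfrac12$-Chamfer distribution, and apply Lemma~\ref{Lemma:general} with $f=1/2$, $h=1$, $t=s=10/\eps^2$, $\kappa=\eps$ to get the $1/10$ Chebyshev failure probability. The one place you go wrong is the final accounting, and the fix is not to re-tune constants but to notice that the failure events are \emph{nested}, not disjoint: the $3/5$ bound of Lemma~\ref{Lemma:size} is already computed by first charging for the failures of Claim~\ref{Claim:P} and Claim~\ref{Claim:D'} (its proof begins ``We assume that Claim~\ref{Claim:P} and \ref{Claim:D'} hold,'' and Claim~\ref{Claim:D'}'s $4/5$ in turn already absorbs Claim~\ref{Claim:P}'s $1/10$). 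So the only two terms in the union bound are the $2/5$ failure of Lemma~\ref{Lemma:size} and the $1/10$ from Chebyshev, giving success probability at least $1-2/5-1/10=1/2$ exactly as claimed. Your sum $1/10+1/5+2/5+1/10$ double-counts the first two events, which is why you land at $1/5$ and are led to propose modifying $q$ and $s$; no modification is needed.

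Your observation about the normalization factor $Z=\sum_{a}\calP'(a)$ is a fair one: the estimator divides by $\calP'(s_j)=\calD'(s_j)/\calD'$, while Lemma~\ref{Lemma:equivalent} shows the sampling probability is $\calP'$ \emph{renormalized}, so a literal application of Lemma~\ref{Lemma:general} requires identifying the two (equivalently, arguing that the bias factor $\calD'/\sum_{a_0}\calD'(a_0)$ is close enough to $1$). The paper's own proof makes this identification silently, so on this point you are not diverging from the paper --- you are flagging a step it leaves implicit. If you want to close it cleanly, the natural move is to run the importance-sampling variance computation directly for the estimator $\frac{1}{s}\sum_j \opt_{s_j}/\calP'(s_j)$ with sampling weights $\calD'(a)/\sum_{a_0}\calD'(a_0)$, rather than trying to force it into the exact statement of Lemma~\ref{Lemma:general}; but this is a presentational repair of the paper's argument, not a flaw specific to your proposal.
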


\begin{proof}
    In the success case of Lemma \ref{Lemma:size}, we can apply Lemma \ref{Lemma:general} with $f = 1/2$, $h = 1$, $t = s$, and $\kappa = \eps$. Then 
    \[\bpr{\Bigl| \tilde{\mathsf{CH}}(A, B) - \mathsf{CH}(A, B) \Bigr| \geq \eps \cdot \mathsf{CH}(A, B)} \leq \frac{1}{s \cdot \eps^2}. \]
\end{proof}

\begin{theorem}
    {\normalfont \textbf{Chamfer-Estimate}($A, B, \eps, q = 10^4{{D}}/\eps^2$)} runs in time $\OO(nd(\log\log n + \log \frac{1}{\eps})/\eps^2))$.
\end{theorem}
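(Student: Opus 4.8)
The plan is to account for the running time of each of the five steps of the \textbf{Chamfer-Estimate} algorithm separately and then add them up, using the value $q = 10^4 D/\eps^2$ with $D = \min(d, 3\log n)$.

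\textbf{Step-by-step accounting.} First, Line \ref{Line:QuadTree} invokes $\QuadTree(A,B)$, which by the theorem establishing its runtime takes $\OO(nd\log\log n)$ time; summing the $\calD_a$ to form $\calD$ adds $\OO(n)$. Second, Line \ref{Line:distribution} builds the distribution $\calP$ in $\OO(n)$ time (one pass over $A$) and draws $q$ samples in $\OO(q)$ time; since $q = \OO(D/\eps^2) = \OO(n)$ under the regime $\eps \ge \log^2 n/\sqrt n$, this is dominated. Third, Line \ref{Line:Tournament} runs $\Tournament(\{x_i\}_{i\in[q]}, B)$ with $t = q = \Theta(D/\eps^2)$; here I would quote the discussion immediately following Theorem~\ref{Theorem:tournamentTime}, which already observes that under $t = \Theta(D/\eps^2)$ and $\eps^{-2} = \OO(n/\log^4 n)$ the first additive term of that theorem dominates and gives $\OO(nd(\log\log n + \log\frac1\eps)/\eps^2)$. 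Computing $\calD'$ from the outputs adds $\OO(q)$. Fourth, Line \ref{Line:sample}: computing $M$ and performing the Bernoulli trials over the $q$ points is $\OO(q)$; collecting the first $s = 10/\eps^2$ accepted points is $\OO(q)$ as well. Fifth, Line \ref{Line:exact} computes $\opt_{s_j}$ exactly for each of the $s = \OO(1/\eps^2)$ points in $S$ by brute force over $B$, costing $\OO(nd)$ per point, hence $\OO(nd/\eps^2)$ total, and the final weighted sum is $\OO(s)$.

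\textbf{Summing up.} Adding the five contributions, the total is
\[
\OO(nd\log\log n) + \OO\!\left(\frac{nd(\log\log n + \log\tfrac1\eps)}{\eps^2}\right) + \OO\!\left(\frac{nd}{\eps^2}\right) + \OO(q) = \OO\!\left(\frac{nd(\log\log n + \log\tfrac1\eps)}{\eps^2}\right),
\]
since the Tournament term dominates the QuadTree term (as $\eps \le 1$, or more carefully because $1/\eps^2 \ge 1$) and the $\OO(nd/\eps^2)$ term from the exact computation is absorbed into it; the $\OO(q)$ overhead from sampling and rejection is negligible.

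\textbf{Main obstacle.} The only subtle point — and the step I expect to require the most care — is verifying that the Tournament invocation really is charged correctly: one must confirm that with $t = q = \Theta(D/\eps^2)$, the term $dt^2\log t\log n$ in Theorem~\ref{Theorem:tournamentTime} does not dominate. This needs the standing assumption $\eps \ge \log^2 n/\sqrt n$ (equivalently $\eps^{-2} = \OO(n/\log^4 n)$): then $dt^2\log t\log n = \OO(d D^2 \eps^{-4}\log n\log\log n) = \OO(nd(\log\log n + \log\frac1\eps)/\eps^2)$ because $D^2\eps^{-2}\log n\log\log n = \OO(n)$ up to the relevant logarithmic factors, so this term is subsumed by $n(d+t)(\log t + \log\log n)$. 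Everything else is a routine tally, so I would keep the write-up brief, citing the earlier QuadTree and Tournament runtime theorems as black boxes and only spelling out the $\eps$-regime check.
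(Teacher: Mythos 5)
Your accounting matches the paper's proof: both identify the three dominant contributions (the $\QuadTree$ call, the $\Tournament$ call with $t=q=\Theta(D/\eps^2)$, and the $\OO(nd/\eps^2)$ brute-force computation of $\opt_{s_j}$) and observe that the Tournament term dominates. Your extra check that $dt^2\log t\log n$ is subsumed under the regime $\eps^{-2}=\OO(n/\log^4 n)$ is correct and is exactly the verification the paper defers to the remark following Theorem~\ref{Theorem:tournamentTime}.
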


\begin{proof}
    This is dominated by the runtime of $\QuadTree$, $\Tournament$, and the time of computing $\opt_{s_j}$ on Line \ref{Line:exact}. $\QuadTree(A, B)$ runs in $\OO(nd \log \log n)$ time and $\Tournament$ $(\{x_i\}_{i \in [q]}, B)$ runs in $\OO(n d(\log\log n+\log\frac{1}{\eps})/\eps^2)$ time. Finally, the brute-force search for $\opt_{s_j}$ for $j \in [10/\eps^2]$ takes $\OO(nd/\eps^2)$ time.
\end{proof}

\nocite{langley00}

\section*{Acknowledgements}
Ying Feng was supported by an MIT Akamai Presidential Fellowship. Piotr Indyk was supported in part by the NSF TRIPODS program (award DMS-2022448).The authors would like to thank Anders Aamand for helpful comments that helped simplify the algorithm in Section~\ref{s:tour}.

\def\shortbib{0}
\bibliographystyle{alpha}
\bibliography{ref}

\newcommand{\etalchar}[1]{$^{#1}$}
\begin{thebibliography}{SMFW04}

\bibitem[AHNR95]{AHN95}
Arne Andersson, Torben Hagerup, Stefan Nilsson, and Rajeev Raman.
\newblock Sorting in linear time?
\newblock In {\em Proceedings of the Twenty-Seventh Annual ACM Symposium on Theory of Computing}, STOC '95, pages 427--436, New York, NY, USA, 1995. Association for Computing Machinery.

\bibitem[AM19]{atasu19a}
Kubilay Atasu and Thomas Mittelholzer.
\newblock Linear-complexity data-parallel earth mover’s distance approximations.
\newblock In Kamalika Chaudhuri and Ruslan Salakhutdinov, editors, {\em Proceedings of the 36th International Conference on Machine Learning}, volume~97 of {\em Proceedings of Machine Learning Research}, pages 364--373. PMLR, 09--15 Jun 2019.

\bibitem[AR15]{andoni2015optimal}
Alexandr Andoni and Ilya Razenshteyn.
\newblock Optimal data-dependent hashing for approximate near neighbors.
\newblock In {\em Proceedings of the forty-seventh annual ACM symposium on Theory of computing}, pages 793--801, 2015.

\bibitem[AS03]{athitsos2003estimating}
Vassilis Athitsos and Stan Sclaroff.
\newblock Estimating 3d hand pose from a cluttered image.
\newblock In {\em 2003 IEEE Computer Society Conference on Computer Vision and Pattern Recognition, 2003. Proceedings.}, volume~2, pages II--432. IEEE, 2003.

\bibitem[BFP{\etalchar{+}}73]{BFP73}
Manuel Blum, Robert~W. Floyd, Vaughan Pratt, Ronald~L. Rivest, and Robert~E. Tarjan.
\newblock Time bounds for selection.
\newblock {\em J. Comput. Syst. Sci.}, 7(4):448--461, August 1973.

\bibitem[BIJ{\etalchar{+}}23]{BIJ24}
Ainesh Bakshi, Piotr Indyk, Rajesh Jayaram, Sandeep Silwal, and Erik Waingarten.
\newblock A near-linear time algorithm for the chamfer distance, 2023.

\bibitem[Cha08]{C08}
Timothy~M. Chan.
\newblock Well-separated pair decomposition in linear time?
\newblock {\em Information Processing Letters}, 107(5):138--141, 2008.

\bibitem[FSG17]{fan2017point}
Haoqiang Fan, Hao Su, and Leonidas~J Guibas.
\newblock A point set generation network for 3d object reconstruction from a single image.
\newblock In {\em Proceedings of the IEEE conference on computer vision and pattern recognition}, pages 605--613, 2017.

\bibitem[HP11]{har2011geometric}
Sariel Har-Peled.
\newblock {\em Geometric approximation algorithms}.
\newblock Number 173. American Mathematical Soc., 2011.

\bibitem[Ind06]{indyk2006stable}
Piotr Indyk.
\newblock Stable distributions, pseudorandom generators, embeddings, and data stream computation.
\newblock {\em Journal of the ACM (JACM)}, 53(3):307--323, 2006.

\bibitem[JSQJ18]{jiang2018gal}
Li~Jiang, Shaoshuai Shi, Xiaojuan Qi, and Jiaya Jia.
\newblock Gal: Geometric adversarial loss for single-view 3d-object reconstruction.
\newblock In {\em Proceedings of the European conference on computer vision (ECCV)}, pages 802--816, 2018.

\bibitem[Kle97]{K97}
Jon~M. Kleinberg.
\newblock Two algorithms for nearest-neighbor search in high dimensions.
\newblock In {\em Proceedings of the Twenty-Ninth Annual ACM Symposium on Theory of Computing}, STOC '97, pages 599--608, New York, NY, USA, 1997. Association for Computing Machinery.

\bibitem[KSKW15]{kusner2015word}
Matt Kusner, Yu~Sun, Nicholas Kolkin, and Kilian Weinberger.
\newblock From word embeddings to document distances.
\newblock In {\em International conference on machine learning}, pages 957--966. PMLR, 2015.

\bibitem[KZ20]{khattab2020colbert}
Omar Khattab and Matei Zaharia.
\newblock Colbert: Efficient and effective passage search via contextualized late interaction over bert.
\newblock In {\em Proceedings of the 43rd International ACM SIGIR conference on research and development in Information Retrieval}, pages 39--48, 2020.

\bibitem[pda23]{pdal}
Pdal: Chamfer.
\newblock \url{https://pdal.io/en/2.4.3/apps/chamfer.html}, 2023.
\newblock Accessed: 2023-05-12.

\bibitem[pyt23]{pytorch3d}
Pytorch3d: Loss functions.
\newblock \url{https://pytorch3d.readthedocs.io/en/latest/modules/loss.html}, 2023.
\newblock Accessed: 2023-05-12.

\bibitem[SMFW04]{sudderth2004visual}
Erik~B Sudderth, Michael~I Mandel, William~T Freeman, and Alan~S Willsky.
\newblock Visual hand tracking using nonparametric belief propagation.
\newblock In {\em 2004 Conference on Computer Vision and Pattern Recognition Workshop}, pages 189--189. IEEE, 2004.

\bibitem[ten23]{tensorflow}
Tensorflow graphics: Chamfer distance.
\newblock \url{https://www.tensorflow.org/graphics/api_docs/python/tfg/nn/loss/chamfer_distance/evaluate}, 2023.
\newblock Accessed: 2023-05-12.

\bibitem[WCL{\etalchar{+}}19]{wan2019transductive}
Ziyu Wan, Dongdong Chen, Yan Li, Xingguang Yan, Junge Zhang, Yizhou Yu, and Jing Liao.
\newblock Transductive zero-shot learning with visual structure constraint.
\newblock {\em Advances in neural information processing systems}, 32, 2019.

\end{thebibliography}


\appendix

\section{Reducing the Bit Precision of Inputs.}

In our algorithm, we assumed that all points in input sets $A, B$ are integers in $\{1,2 , \cdots, $ $\poly(n)\}^d$. Here, we show that this is without loss of generality, as long as all coordinates of the original input are $w$-bit integers for arbitrary $w \geq \log n$ in a unit-cost RAM with a word length
of $w$ bits.

Section $A.3$ of \cite{BIJ24} gives an efficient reduction from real inputs to the case that \[1 \leq \min_{a \in A, b \in B} \lVert a - b\rVert_1  \leq \max_{a \in A, b \in B} \lVert a - b\rVert_1 \leq \poly(n),\] i.e., the input has a $\poly(n)$-bounded aspect ratio. Their reduction can be adapted to our case as follows:

\begin{clm}[Lemma $A.3$ of \cite{BIJ24}]
    Given an $\mathsf{est}$ such that $\mathsf{CH}(A, B) \leq  \mathsf{est} \leq \poly(n) \cdot \mathsf{CH}(A, B)$, if there exists an algorithm that computes an $(1+\eps)$-approximation to $\mathsf{CH}(A, B)$ in $\OO(nd(\log\log n+\log\frac{1}{\eps})/\eps^2))$ time under the assumption that $A, B$ contain points from 
    $\{1 \ldots \mbox{\poly}(n) \}^d$, then there exists an algorithm that computes an $(1+\eps)$-approximation to $\mathsf{CH}(A, B)$ for any 
 integer-coordinate $A, B$ in asymptotically same time.
\end{clm}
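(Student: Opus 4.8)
The plan is to follow Lemma $A.3$ of~\cite{BIJ24}, which I would recap with three minor adaptations: our inputs are integral rather than real (which only simplifies matters), our target running time carries a $\log\log n+\log\frac1\eps$ factor in place of $\log n$ (so it is enough that the reduction itself runs in $\OO(nd)$ time), and we must use the supplied estimate $\mathsf{est}$ together with the standing assumption $\eps\ge\log^2n/\sqrt n$ to certify that the reduced instance has coordinates bounded by $\poly(n)$. If $\mathsf{est}=0$ the promise forces $\mathsf{CH}(A,B)=0$, and we output $0$; so assume $\mathsf{CH}(A,B)>0$.

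\emph{Coarsening.} Set $\delta:=\max\{1,\ \lceil\Theta(\eps\,\mathsf{est}/(nd\cdot\poly(n)))\rceil\}$, a positive integer, with the hidden polynomial chosen so that $nd\delta\le\frac\eps3\,\mathsf{est}/\poly(n)$ whenever the $\Theta(\cdot)$-term is at least $1$. Round every coordinate of every input point to the nearest multiple of $\delta$ and divide by $\delta$; call the results $\tilde A,\tilde B$. Rounding moves each point by at most $\frac{\delta}2$ per coordinate, so it perturbs every pairwise $\ell_1$ distance, and hence $\mathsf{CH}$, by at most $nd\delta\le\frac\eps3\mathsf{CH}(A,B)$ (trivially so when $\delta=1$, since the integral inputs are then unmoved, using $\mathsf{CH}(A,B)\le\mathsf{est}$ in that regime); consequently $\delta\cdot\mathsf{CH}(\tilde A,\tilde B)\in(1\pm\frac\eps3)\mathsf{CH}(A,B)$. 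Moreover $\mathsf{CH}(\tilde A,\tilde B)\le 2\,\mathsf{CH}(A,B)/\delta\le 2\,\mathsf{est}/\delta\le\poly(n)/\eps$, which is $\poly(n)$ since $\eps\ge\log^2n/\sqrt n$ keeps $1/\eps$ polynomial; in particular every $\tilde a\in\tilde A$ has a nearest neighbour in $\tilde B$ within $\ell_1$-distance $R=\poly(n)$. The coordinates of $\tilde A,\tilde B$ are integral but may still be as large as $2^w$, so one more step is needed to bound their magnitudes.

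\emph{Localizing and merging into one instance.} Impose the axis-parallel grid of cell side length $L=\Theta(n^cR)$ for a large constant $c$, translated by a uniformly random vector in $[0,L)^d$. A fixed pair at $\ell_1$-distance $\le R$ is split into different cells with probability at most $R/L$, so a union bound over the $\le n$ pairs $(\tilde a,\text{NN}_{\tilde B}(\tilde a))$ shows that with probability $\ge 1-n^{-\Omega(1)}$ every $\tilde a$ lies in the same cell as its nearest neighbour. On this event the nearest-neighbour distance of $\tilde a$ within its own cell equals $\opt_{\tilde a}$ (it is sandwiched between $\opt_{\tilde a}$ and $\|\tilde a-\text{NN}(\tilde a)\|_1$), so $\mathsf{CH}(\tilde A,\tilde B)=\sum_C\mathsf{CH}(\tilde A\cap C,\tilde B\cap C)$ over the cells $C$. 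Re-index the $m\le 2n$ nonempty cells as $1,\dots,m$ and build $(\hat A,\hat B)$ by translating the points of the $j$-th cell into $[0,L]^d$ and then adding $j(2L+1)$ to \emph{every} coordinate: points of distinct cells then differ by at least $L+1$ in each of the $d$ coordinates, hence by at least $d(L+1)>dL$ in $\ell_1$, which exceeds the $\ell_1$-diameter $dL$ of a single cell, so no nearest neighbour crosses a cell and $\mathsf{CH}(\hat A,\hat B)=\mathsf{CH}(\tilde A,\tilde B)$ on the good event. The coordinates of $\hat A,\hat B$ lie in $\{0,\dots,\OO(nL)\}=\{1,\dots,\poly(n)\}$ after a unit shift (treating $d\le\poly(n)$; in general $\poly(n,d)$, which affects the running time only through $\log\log(nd)$).

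\emph{Invoking the assumed algorithm.} Run it once on $(\hat A,\hat B)$ with parameter $\eps/3$ and return $\delta$ times its output; a constant number of repetitions with a median boosts the success probability if desired. Computing $\delta$ and the rounded points, assigning grid cells ($\OO(1)$ word operations per coordinate), grouping points by cell (via hashing), re-indexing, and assembling $(\hat A,\hat B)$ all cost $\OO(nd)$, and the single call costs $\OO(nd(\log\log n+\log\frac1\eps)/\eps^2)$; a union bound over the grid event and the algorithm's success gives an output within $(1\pm\frac\eps3)^2\subseteq(1\pm\eps)$ of $\mathsf{CH}(A,B)$. The main obstacle is to get away with a \emph{single} call to the assumed algorithm: running it on each of the $\le 2n$ cells separately and amplifying to per-cell failure probability $\OO(1/n)$ would cost an extra $\log n$ factor, so instead the uniform per-cell shift fuses the cells into one $\poly(n)$-coordinate instance whose cells are mutually non-interacting \emph{without} inflating the coordinate range; the other nontrivial point — certifying the $\poly(n)$ coordinate bound — is exactly what $\mathsf{est}$ and the regime $\eps\ge\log^2n/\sqrt n$ provide. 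The remainder is bookkeeping as in~\cite{BIJ24}.
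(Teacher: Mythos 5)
Your reconstruction is correct, but it is worth noting that the paper does not actually prove this claim: it is imported verbatim as Lemma A.3 of \cite{BIJ24}, and the only new content the paper supplies in the appendix is the $\OO(nd+n\log\log n)$-time computation of $\mathsf{est}$ via a Cauchy projection. So what you have written is a self-contained re-derivation rather than a paraphrase. The skeleton (round to a grid of pitch $\approx \eps\,\mathsf{est}/(nd\cdot\poly(n))$ to control additive error, then use a randomly shifted coarse grid of side $\Theta(n^c R)$ so that each point and its nearest neighbour co-habit a cell with high probability) is the same as in \cite{BIJ24}. The genuinely distinct ingredient is your merging step: translating the $j$-th nonempty cell by $j(2L+1)$ in \emph{every} coordinate so that cross-cell $\ell_1$ distances exceed the within-cell diameter, which lets you make a \emph{single} black-box call to the assumed algorithm. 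This is exactly the right move here, since calling the algorithm once per cell would require amplifying each call's constant success probability to $\OO(1/n)$, reintroducing a $\log n$ factor that the present paper cannot afford (whereas \cite{BIJ24} could absorb it). Your bookkeeping checks out: $\delta\cdot\mathsf{CH}(\tilde A,\tilde B)\in(1\pm\eps/3)\mathsf{CH}(A,B)$, the reduced coordinates are $\poly(n)$ precisely because $\mathsf{est}\le\poly(n)\cdot\mathsf{CH}(A,B)$ and $1/\eps\le\sqrt{n}/\log^2 n$, and the reduction itself costs $\OO(nd)$ word operations. The only caveats are the ones you already flag --- the coordinate bound is really $\poly(n,d)$, and the success probability degrades by an additive $n^{-\Omega(1)}$ from the grid event --- neither of which affects the stated conclusion.
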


It remains to show how to obtain a $\poly(n)$-approximation.

\begin{lem}
    There exists an $\OO(nd + n\log\log n)$-time algorithm that computes $\mathsf{est}$ which satisfies $\mathsf{CH}(A, B) \leq  \mathsf{est} \leq \poly(n)\cdot \mathsf{CH}(A, B)$ with $1-\frac{1}{n}$ probability.
\end{lem}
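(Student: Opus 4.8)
The plan is to collapse the instance to a single dimension with one $1$-stable (Cauchy) projection, which preserves all pairwise $\ell_1$ distances up to a $\poly(n)$ factor, and then solve the resulting one-dimensional Chamfer problem \emph{exactly} by sorting. The $\log\log n$ in the running time is incurred only in the sort. The point is that we do \emph{not} want to call $\QuadTree$ directly: its correctness is scale-free, but its running-time analysis needs $\OO(\log n)$-bit coordinates, whereas here coordinates are $w$-bit for arbitrary $w\ge\log n$, so both the $\log n$ worth of quadtree levels and the $\OO(d\log w)$ transpose cost are too expensive.

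First I would sample one vector $v\sim(\mathsf{Cauchy}(0,1))^{d}$ and compute $v\cdot x$ for every $x\in A\cup B$ in $\OO(nd)$ time. By $1$-stability of the Cauchy law, for fixed $x\neq y$ the ratio $|v\cdot(x-y)|/\lVert x-y\rVert_1$ is a standard half-Cauchy, so $\Pr[\,|v\cdot(x-y)|\notin[\,\lVert x-y\rVert_1/n^{c},\,n^{c}\lVert x-y\rVert_1\,]\,]=\OO(1/n^{c})$ for a constant $c$ we choose. Union-bounding over the $\le n^{2}$ pairs (and over the $\OO(1/n^{c})$ event that some $v_i$ itself exceeds $n^{c}$ in magnitude, which lets us clamp the $v_i$ to $\OO(\log n)$-bit numbers harmlessly), with probability $1-\OO(1/n)$ the map $x\mapsto v\cdot x$ distorts every pairwise $\ell_1$ distance by at most $n^{c}$; in particular the nearest-neighbor distance of $v\cdot a$ in $v\cdot B$ lies in $[\opt_a/n^{c},\,n^{c}\opt_a]$ for every $a\in A$.

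Next I would discretize: set $\tilde x:=\lfloor n^{c+1}(v\cdot x)\rceil$ and shift all $\tilde x$ by a common integer so they are nonnegative. Since $w$-bit integer inputs have $\ell_1$ distances that are $0$ or at least $1$ and at most $d\,2^{w}$, and $w\ge\log n$, the $\tilde x$ are nonnegative integers in an $\OO(w)$-bit range, i.e.\ $\OO(1)$ machine words. Rounding each coordinate perturbs each pairwise distance by at most $1$; an original distance of $0$ means $a=b$, hence $v\cdot a=v\cdot b$ and $\tilde a=\tilde b$, while an original distance $\ge 1$ has scaled distance $\ge n^{c+1}/n^{c}=n$, which survives rounding (up to a factor $2$). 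Consequently $\opt_a\le\opt^{\tilde A,\tilde B}_{\tilde a}\le\poly(n)\cdot\opt_a$ for every $a$ — the lower bound because the surviving scaled distances are at least $n/2$ times the originals, the upper bound because $n^{c+1}|v\cdot(a-b^{*})|+1\le n^{2c+1}\opt_a+1\le 2n^{2c+1}\opt_a$ when $\opt_a\ge1$ (and $0=0$ otherwise) — so $\mathsf{CH}(A,B)\le\mathsf{CH}(\tilde A,\tilde B)\le\poly(n)\cdot\mathsf{CH}(A,B)$. I would then sort the $2n$ values $\tilde x$ (a plain comparison sort costs $\OO(n\log n)$, which already suffices when $d=\Omega(\log n)$; for smaller $d$ one uses a word-RAM integer sort running in $\OO(n\log\log n)$ time, applied word-by-word if needed), and in one linear pass over the sorted list compute, for each $\tilde a$, the smaller of its distances to the nearest $\tilde B$-element on its left and on its right, summing these to obtain $\mathsf{CH}(\tilde A,\tilde B)$ exactly. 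Output this sum as $\mathsf{est}$. Whenever the projection's good event holds — probability $\ge1-1/n$ for $c$ chosen large enough — we get $\mathsf{CH}(A,B)\le\mathsf{est}\le\poly(n)\cdot\mathsf{CH}(A,B)$, and the total time is $\OO(nd)+\OO(n\log\log n)$.

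The only real obstacle is the unbounded coordinate precision / aspect ratio of the input, which is exactly what blocks a direct use of $\QuadTree$; a single $1$-stable projection removes both the dependence on $d$ and the need for many scales, leaving a one-dimensional problem that sorting solves exactly with no approximation loss of its own. What remains is bookkeeping: choosing the scaling factor $n^{c+1}$ so that the lower bound $\mathsf{CH}(A,B)\le\mathsf{est}$ holds exactly and not merely up to a $\poly(n)$ factor; checking that rounding neither kills a positive distance nor creates a spurious coincidence; and observing that the degenerate case $\mathsf{CH}(A,B)=0$ is handled automatically, since an exactly coincident pair $a=b$ stays coincident after projection and rounding, forcing $\mathsf{est}=0$.
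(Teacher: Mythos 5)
Your proposal is correct and follows essentially the same route as the paper: a single Cauchy ($1$-stable) projection to one dimension, a union bound over the $n^2$ pairs to get $\poly(n)$ distortion of all pairwise $\ell_1$ distances with probability $1-\OO(1/n)$, an $\OO(n\log\log n)$ word-RAM integer sort of the projected values, and a linear scan to solve the one-dimensional Chamfer problem exactly. The only difference is that you spell out the scaling/rounding bookkeeping (ensuring the one-sided bound $\mathsf{CH}(A,B)\le\mathsf{est}$ and handling coincident points) that the paper dispatches with ``we may assume by scaling,'' which is a welcome clarification rather than a deviation.
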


\begin{proof}
    Similar to (the proof of Lemma $A.3$ in) \cite{BIJ24}, we sample a vector $v \sim \mathsf{Cauchy}(0, 1)$, which can be discretized to $\OO(\log n)$-bit precision following \cite{indyk2006stable}. We then compute the inner products $\{ v\cdot a \}_{a \in A}$ and $\{ v\cdot b \}_{b \in B}$. The distribution of $ v \cdot a - v \cdot b$ follows $\mathsf{Cauchy}(0, \lVert a- b\rVert_1)$ by the $1$-stability property of Cauchy's. So we have that for every $a \in A$ and $b \in B$, 
\[ \frac{\lVert a - b\rVert_1}{\poly(n)} \leq |v \cdot a - v \cdot b| \leq \lVert a - b\rVert_1 \cdot \poly(n), \]
with probability $1-1/\poly(n)$. Therefore, $\mathsf{est} := \mathsf{CH}(\{ v\cdot a \}_{a \in A}, \{ v\cdot b \}_{b \in B})$ is a $\poly(n)$-approximation to $\mathsf{CH}(A,B)$. We may assume by scaling that 
$\{ v\cdot a \}_{a \in A}, \{ v\cdot b \}_{b \in B}$ contain $w$-bit integers, which can be sorted in $\OO(n \log\log n)$ time \cite{AHN95}. Then to compute $\mathsf{est}$, we find all one-dimensional nearest neighbors by going
through the sorted list and link each $a'\in \{ v\cdot a \}_{a \in A}$ with adjacent $b'\in \{ v\cdot b \}_{b \in B}$, which takes $\OO(n)$ time. Thus the total runtime is $\OO(nd + n\log\log n)$ as claimed.
\end{proof}

\section{Proof of Lemma \ref{Lemma:comparison}}

\begin{proof}
    We use the fact that for $v \sim (\mathsf{Cauchy}(0, 1))^d$ and any $x \in \R^d$, $(v \cdot x) \sim \mathsf{Cauchy}(0, \lVert x \rVert_1)$. Also, for any $k > 0$, if a random variable $z \sim \mathsf{Cauchy}(0, 1)$ then $kz \sim \mathsf{Cauchy}(0, k)$. Therefore, for any $v_i : i \in [r]$, $\pr{|v_i \cdot (x - y)| > (1+c) \lVert x- y\rVert_1} = \pr{U > 1+c}$ where $U\sim \mathsf{HalfCauchy}(0, 1)$. The density of $U$ is $f_U(u) = \frac{2}{\pi}\cdot \frac{1}{1+u^2}$, thus $ \pr{U > 1} = 1/2$ and
    \begin{align*}
        \pr{U > 1+c} &= \frac{1}{2} - \int^{1+c}_1 f_U(u) du \\
        &\leq \frac{1}{2} - c \cdot f_U(3/2) && \text{for $0<c<1/2$}\\
        &< \frac{1}{2} - c/10
    \end{align*}

    Similarly, we can get $\pr{|v_i \cdot (x - y)| < (1-c) \lVert x- y\rVert_1} < \frac{1}{2} - c/10$. For $i \in [r]$, let $\mathcal{I}_i$ be an indicator variable that equals $1$ if $|v_i \cdot (x - y)| < (1-c) \lVert x- y\rVert_1$ and equals $0$ otherwise. By Hoeffding's bound,

    \[\pr{\sum_{i \in [r]} \mathcal{I}_i  \geq \frac{r}{2}} < e^{-2rc^2/100},\]

    which upper bounds the failure probability that the median is too small. We symmetrically bound the probability that the median is too large. Then \[\pr{\med\{|v_i \cdot (x -y) | : i \in [r]\} \in (1\pm c) \lVert x - y\rVert_1} \geq {1-2e^{-rc^2/50}}.\]
    
\end{proof}

\end{document}